\documentclass[pdflatex,sn-mathphys-num]{sn-jnl}


\usepackage{graphicx}%
\usepackage{float}
\usepackage{multirow}%
\usepackage{amsmath,amssymb,amsfonts}%
\usepackage{amsthm}%
\usepackage{mathrsfs}%
\usepackage[title]{appendix}%
\usepackage{xcolor}%
\usepackage{textcomp}%
\usepackage{manyfoot}%
\usepackage{booktabs}%
\usepackage{algorithm}%
\usepackage{algorithmicx}%
\usepackage{algpseudocode}%
\usepackage{listings}%


\theoremstyle{thmstyleone}%
\newtheorem{theorem}{Theorem}
%

\theoremstyle{thmstyletwo}%

\theoremstyle{thmstylethree}%
\newtheorem{definition}{Definition}%

\raggedbottom

\begin{document}

\title[Article Title]{A Semantic Link Network Model for Supporting Traceability of Logistics on Blockchain}


\author[1]{\fnm{Xiaoping} \sur{Sun}}

\author[2,3]{\fnm{Sirui} \sur{Zhuge}}

\author*[4,5]{\fnm{Hai} \sur{Zhuge}}\email{zhuge@gbu.edu.cn}

\affil[1]{\orgdiv{Key Lab of Intelligent Information Processing, Institute of Computing Technology}, \orgname{Chinese Academy of Sciences},  \city{Beijing}, \country{China}}

\affil[2]{\orgname{King’s College},\city{London}, \country{UK}}
\affil[3]{\orgname{Publicis Sapient}, \city{London}, \country{UK}}

\affil[4]{\orgname{Great Bay University},  \city{Dongguan}, \country{China}}
\affil*[5]{\orgname{Great Bay Institute for Advanced Study},  \city{Dongguan}, \country{China}}


\abstract{The ability of tracing states of logistic transportations requires an efficient storage and retrieval of the state of logistic transportations and locations of logistic objects. However, the restriction of sharing states and locations of logistic objects across organizations from different countries makes it hard to deploy a centralized database for implementing the traceability in a cross-border logistic system. This paper proposes a semantic data model on Blockchain to represent a logistic process based on the Semantic Link Network model where each semantic link represents a logistic transportation of a logistic object between two parties. A state representation model is designed to represent the states of a logistic transportation with semantic links. It enables the locations of logistic objects to be derived from the link states. A mapping from the semantic links to the blockchain transactions is designed to enable schema of semantic links and states of semantic links to be published in blockchain transactions. To improve the efficiency of tracing a path of semantic links on blockchain platform, an algorithm is designed to build shortcuts along the path of semantic links to enable a query on the path of a logistic object to reach the target in logarithmic steps on the blockchain platform. A reward-penalty policy is designed to allow participants to confirm the state of links on blockchain. Analysis and simulation demonstrate the flexibility, effectiveness and the efficiency of Semantic Link Network on immutable blockchain for implementing logistic traceability.}

\keywords{Blockchain, semantic model, logistic transportation, semantic link network, traceability}



\maketitle

\section{Introduction}\label{Introduction}

The traceability of a logistic system is the ability to trace the states of logistic transportations and the locations of logistic objects being transferred through the logistic transportations of a logistic process. Ensuring the traceability of logistic process can help improve customers’ informedness, especially in cross-border supply chain applications. 
To support traceability, a database for recording the states of logistic transportations and the locations of logistic objects are necessary. Due to the restrictions of sharing transportation states and locations of logistic objects among organizations from different administration regions, centralized databases are not suitable as it is hard to meet the requirements of data privacy protection of different regions. 
Blockchain provides a decentralized infrastructure for publishing and storing blockchain transactions that transfer virtual bitcoins from one payer account to another payee account \cite{RN1}.  Blockchain databases have been developed to support decentralized management of immutable transactions of user-defined virtual assets on blockchain \cite{RN2}.  Blockchain technique provides a decentralized platform for developing new logistic business models that promote visibility, traceability and transparency and reduce the cost by eliminating the third-party mediation and coordination in logistic process. Blockchain technique such as Smart Contract provides the extensibility allowing users to customize data management services on blockchain \cite{RN3}. Implementing traceability needs to record and keep updating states of logistic transportations. It needs a data model over immutable blockchain transactions to support data updating and tracing on blockchain. Two issues need to be addressed: (1) it should be able to represent the states of any logistic transportation and locations of logistic objects on blockchain and the state updates can be published in immutable blockchain transactions, and (2) it should support efficient querying of states of logistic transportations and locations of logistic objects on blockchain.
Implementing logistic state updating and tracing on blockchain transactions and smart contracts is a challenge due to the immutability of blockchain \cite{RN4}. Logistic management rules implemented in smart contract may change during transportation and should be promptly reflected in the logistic tracing process. But rewriting smart contracts is costly. The hash pointers underlying blockchain transactions can be revised to allow for updates \cite{RN5}, but it will break traceability of transactions. Verification and representation of logistic rules can be implemented separately in different smart contracts, which can partially reduce re-deployments of smart contracts \cite{RN6}. But any update made to logistic management rules will still require updating smart contracts that represents the logistic rules, which can result in a loss of traceability because it has to add new smart contracts for new logistic rules.
Publishing logistic data along with its schema on blockchain is a way to improve flexibility and adaptability. The schema should be simple, powerful, and capable of being stored on blockchain for tracing purposes. The Semantic Link Network (in short SLN) is such a flexible and extensible graph-based semantic representation model that can represent logistic transportations by semantic links published on blockchain transactions \cite{RN7}. It has schema with a graph structure to define reasoning rules among links. In the model, a logistic process can be represented by a semantic link network where nodes represent the parties and semantic links between two nodes represent logistic transportations between two parties. Each logistic object in the logistic process is uniquely represented by a binary string ID. The schema of semantic link network instances specifies rules for determining legal transitions between link states. A semantic link between two nodes is attached by a logistic object ID to represent one logistic transportation of the logistic object between two parties. The logistic transportation path corresponds to a path of links that transfer the same logistic object from one node at the beginning of the path to the node at the end of the path. A state model of link can be designed to represent the states of logistic transportations in the logistic process.  Locations of logistic objects can be derived from the semantic link states. An algorithm can be designed to publish semantic links with different states in blockchain transactions by treating the logistic objects as virtual assets of blockchain transactions. The state updating can be implemented by publishing semantic links with different states in blockchain transactions. The schema can also be published in blockchain transactions to support the verification of state transitions. The traceability can be implemented by tracing semantic links published in blockchain transactions. To further improve the query efficiency of the path of links for obtaining the state of logistic transportations in a logistic transportation path, shortcuts of a path of semantic links can be added to the semantic link path when logistic objects are transferred along the path, which makes the steps of querying a path of links within logarithmic scale. A penalty-reward policy can be designed to manage the confirmation process of states of published links in a decentralized way on blockchain platform. To demonstrate the effectiveness and the efficiency of the semantic link network model on blockchain, theoretical and experimental research is necessary.

\section{Related work}\label{sec2}
\subsection{Logistic Traceability}\label{subsec1}
A logistic process consists of multiple parties and transportations of logistic objects among parties, which requires a graph-like data model \cite{RN8} to represent and record the whole logistic process for implementing traceability. The main issue is to record the states of logistic transportation and locations of logistic objects and the updating of the logistic states and locations in a database. In the traditional supply chain management, business process models such as state transition diagrams with data flow charts can be used to represent logistic process \cite{RN9}. Logic operators are used in data flow charts to represents combinations of outputs from process nodes and the state transitions and data flows are further stored in relational databases by transforming the business process model to database schema \cite{RN10}. When modelling the cross-border logistics, centralized databases are not suitable for recording logistic state due to the restriction of sharing data flow and states among different organizations from different nations \cite{RN11}. Blockchain techniques provide an alternative solution to implement logistic traceability and enhance the privacy protection in a decentralized platform where no centralized authority is required, but it also faces many challenges in deployment of adaptability \cite{RN12}.
\subsection{Blockchain Technique}\label{subsec2}
Blockchain is a distributed digital wallet system that records each transaction of transferring an encrypted digital bitcoin from one account to another account in a decentralized way without leveraging the central authority management service \cite{RN1}. On a blockchain platform, each account owns a public-private key pair that is used to encrypt, sign and validate transactions of the owned bitcoins to other account.  A bitcoin is a virtual currency that can be split into smaller pieces of virtual currency. Bitcoins are issued when a hash ID satisfying a hard constraint is mined out. Every participant can issue a bitcoin if they can mine out such an ID. The hardness of mining a hash ID ensures that the increase of virtual currency is controllable. A blockchain transaction is a transference of a bitcoin from the payer account to the payee account, which is recorded in a data structure consisting of the transferred bitcoin number, the hash ID of the previous transaction that transferred the bitcoin to current payer account (as a pointer to connect transactions to a list) and the public key of the payee account. The transaction is encrypted by the public key of the payee account so that only the payee account can decrypt it by its private key. The transaction data structure is signed by the payer to enable the payee to verify the signature to ensure the ownership of the bitcoin of the payer. By following the pointers of the hash IDs of blockchain transactions, one can verify the bitcoin and trace back to the first transaction that issues the original bitcoin.

Blockchain transactions are broadcasted to all hosts and are stored in the latest block of the longest block chain in the local computing device of each host in the blockchain network. Each host keeps mining a hash ID for the current longest chain. The first mined out hash ID (proof-of-work) for the latest block is used as the pointer to connect the latest block to the longest block chain and a new block is prepared for accepting following new transactions. The blockchain transactions that are validated in the longest chain of blocks are accepted as legal transactions, which can avoid the double-spend on the blockchain platform because only the longest blockchain is accepted by all hosts. The hardness of computing a proof-of work prevents malicious hosts from altering the blocks in the accepted longest blockchain. Thus, transactions in blockchain are immutable.
The transactions starting from the root account of a bitcoin form a tree path from the root of the tree to a leaf node which records the path of the bitcoin transferred from the root to the leaf node and can be traced back by verifying the signature of each block transaction on the path. Smart contract techniques are extended to enable the development of various decentralized applications on blockchain by publishing services in a block so that any blockchain transaction can access the customized services of blockchain transaction management \cite{RN3}. Smart contract encapsulates executable procedures in a blockchain transaction such that the procedures can be executed on the data published in the transaction. Smart contracts are published on all peers so become immutable. The encapsulated procedure is protected and can be verified for authentication of the execution results on the data, which provides an extensible way to implement various computing services on data published in transactions  \cite{RN13}.
A blockchain platform needs to reach consensus on the longest blockchain among all hosts, which could take a long time  \cite{RN14} and high computation cost. Although blockchain databases have been receiving more and more attentions, the research focus is still mainly on the consensus and scalability issues  \cite{RN15} as there is still much time cost to reach consensus on blockchain transactions, which can greatly decrease the performance of publishing transactions and retrieval in blockchain databases  \cite{RN14}. Alternative solutions of consensus schema such as Proof-of-Stake, Proof-of-Authority and PBFT (Practical Byzantine Fault Tolerance) have been proposed to allow transactions be verified and synchronized in an efficient and decentralized way, without requiring to mine qualified hash IDs by powerful computing devices \cite{RN16, RN17}. Permissioned blockchain platforms have been developed for developing blockchain applications in controlled network environments where only authorized participants can join \cite{RN18} and most efforts are focused on confidentiality, verifiability and performance across blockchains \cite{RN19}. Blockchain databases such as BigchainDB \cite{RN2} provides enabling frameworks for developing blockchain applications without caring much on the underlying implementations of the blockchain transaction publishing, synchronization and verification. One of the major differences between the blockchain database and the traditional databases is that the blockchain database is immutable and there lacks data model schemas that can support the ACID-compliant transactions \cite{RN2}. Updating already published data in blockchain transactions should be indirectly realized by replacing the updating with appending new data published in new blockchain transactions.

\subsection{Implementing Logistic Traceability on Blockchain}\label{subsec3-3}
(1) Data storage
Distributed ledgers on blockchain are used as distributed storage shared by participants on a supply chain, where participants can update information published on the shared ledgers that are synchronized and managed in smart contracts on blockchain\cite{RN21}. A multi-blockchain framework has been proposed to enable different logistic business organizations join in and share their blockchain transactions on multiple blockchains according to a hierarchical key distribution mechanism\cite{RN22}. However, the framework does not address the logistic state representation problem. Traceability tags are used to record and verify the owners of products. It does not support logistics transportation path trace as the logistic transportation state updating is not concerned in the framework. 
Supply chain and logistics management can be implemented on the blockchain database BigchainDB, where products and goods are treated as virtual assets that are published in blockchain transactions in BigchainDB\cite{RN23}. However, the state updating issue is not addressed. 
The immutability of blockchain presents challenges when implementing a complex data management model that requires the recording and updating of logistics states and locations\cite{RN4}.  A suitable middle data model is required as a bridge between the immutable blockchain transactions and application transactions to help developer compose applications based on a middle data model\cite{RN24}, rather than directly on the raw blockchain transaction data structures, which can ease the extension and compatibility of applications and reduce the development cost and prompt adoption of blockchain techniques\cite{RN11}. Such kind of data models in previous supply chain management research have been studied in centralized database\cite{RN25}. However, the immutability and decentralization of blockchain poses new challenges for developing data model that requires recording updating of logistic states and location.
(2) Implementation of logistic rules
To implement logistic traceability, it is necessary to verify the conditions that trigger legal transportation according to specific rules for controlling the logistic process under different conditions and events, and to log transportation states while indicating the next legal states and actions, which can be implemented by if-then rules in smart contracts\cite{RN26}. Business Process Management Notation (BPMN) scripts were used to program complex workflow conditions and are published in smart contracts to facilitate the checking of conditions for each action in a supply chain, which helps implement traceability on blockchain\cite{RN27}. However, when the logistic rules are frequently updated through a logistic process, it can be costly to re-write BPMN scripts and re-deploy updated BPMN scripts in new smart contracts, which will also break the traceability. Implementing the rule interpreter of BPMN and scripts of BPMN separately in different smart contracts can reduce the cost of updating smart contracts, as the rule interpreter can remain unchanged\cite{RN6}. However, it is still inevitable to update logistic rules in a logistic process as logistic management rules in a supply chain often change. This requires handling of new logistic rules in a timely manner for an executing logistic process. Implementing tracing services in multiple abstraction layers can improve the system's adaptability and extensibility\cite{RN28}, but it often requires off-chain services to implement application level services, which can break the decentralization of blockchain and traceability.
One possible solution for managing updates of both logistic rules and logistic transportation states is to publish them with blockchain transactions. This allows for easy updating by appending new transactions to the blockchain. To achieve this, a schema for representing logistic rules and a data model for representing logistic transportation states on the blockchain are required. The semantic link network is a suitable candidate graph-like semantic model consisting of semantic nodes, semantic links, schema of reasoning rules and a semantic space where semantics of nodes and links are defined to represent concepts, instances and relationships among them. It supports both centralized applications and decentralized applications (e.g., on peer-to-peer networks) with an interactive semantics\cite{RN29}. Its nodes can be in different spaces to model cyber-physical-social systems\cite{RN30}.  It has been applied to improve e-learning and text summarization\cite{RN7, RN31}. Graph tools have been used to model logistic systems\cite{RN32}. Semantic link network model has the capability of representing different link types, which can act as a middle data model to describe the logistic transportation relationships between two parties in a logistic process. The link structure can be mapped to a blockchain transaction to support publishing logistic transportation, which ease its application in representing logistic transportations on blockchain. The schema defines semantic links and reasoning rules on semantic links of different types, which can also be published as blockchain transactions. A semantic link schema can represent logistic rules on the blockchain. Any update to logistic rules can be implemented by appending new schema on the blockchain. However, it cannot directly record the state of logistic transportations by a semantic link that published in a blockchain transaction. An extension on SLN is required for representing both the logistic transportation state and locations of logistic objects, allowing state updates to be published and queried on blockchain to implement traceability. 
(3) Trace efficiency
It is still time-consuming when the tracing process requires confirmations by participants of the logistic process along the path of links to ensure the consistence between the published state of links and the state of real logistic transportations. The decentralization of the blockchain platform enables tracing to be conducted in a concurrent way like that a decentralized query is concurrently processed in a classic P2P network\cite{RN33}.
A distributed algorithm has been proposed to accelerate blockchain tracing by executing a parallel tracing process among two nodes when the topology of blockchain transactions is a directed graph\cite{RN34}.  However, the blockchain transactions are chained to form a list and only sequential accesses are supported.  
Shortcuts can be added among nodes in structured ring P2P networks to improve the query performance to logarithmic scale\cite{RN35}. 
 (4) State confirmation
The confirmation of consistence can be conducted concurrently by following shortcuts appended on the blockchain list to speed up the whole confirmation process along the semantic link path. Each confirmation can be verified only by the direct participants of the logistic transportation. It should be conducted in a controlled way such that malicious actions can be avoided. As there is no centralized server responsible for the consistency management, a decentralized policy is required to ensure the participants to publish links consistent to the real logistic transportation states. On the blockchain platform, PoW (Proof-of-Work) takes too much computing costs to reach consensus among blocks and cannot be applied for supporting publishing semantic links for logistic traceability. PoS (Proof-of-Stake) and PBFT\cite{RN17} cannot be directly applied to two participants to confirm a link state. In \cite{RN22}, an encrypt schema is used to encrypt traceability tags for anti-counterfeit. But it does not consider the confirmation problem by participants. In \cite{RN36}, a safety checking function is implemented on smart contract to let the users of contracts to denote the safety problem of the contracts, but it does not support arguments of the consensus on the blockchain transactions. In \cite{RN37}, a penalty is given to those account that performs mis-operations and the penalty is stored in the blockchain for checking. It does not consider how to reach consensus on one transaction. A reward-penalty schema is designed to handle the bifurcation problem of blockchain transaction release in PoS based blockchain platform \cite{RN16}. The main concern for PoS is to ensure the consensus on the blocks for storing blockchain transactions published by users. A reward and penalty scheme can be designed based on the consensus status of parties to encourage nodes to reach consensus as quick as possible\cite{RN38}. However, the logistic transportation can be confirmed only by the related participants of the transportation. Thus, when implementing logistic traceability on blockchain platform, it needs a consistence management schema for confirming the consistence of published links with the real transportation by the participants.

\section{Logistic Transportation Semantic Link Network model for Supporting Traceability}\label{sec3}
\subsection{ Basic Scenario of a Logistic Process}\label{subsec1-1}

A logistic process consists of a series of logistic transportations that deliver logistic objects from their source locations to target locations, where logistic objects include the parcel to be transported and the related documents such as contracts, transportation licenses, trade certifications, and custom clearance application forms that are necessary for custom clearance across border. To record each logistic transportation for supporting traceability, we use the semantic link network model\cite{RN29} to represent a logistic transportation by a semantic link with a logistic object ID to represent a logistic transportation. 
A semantic link network consists of a set of semantic links. A semantic link takes the form: $\langle v_1:a -\gamma\rightarrow v_2:b\rangle$, where $\gamma$ is a predefined link type, $v_1$ is an instance of a semantic node of type a, $v_2$ is an instance of another semantic node of type b. $\langle v_1-\gamma\rightarrow v_2\rangle$ is a simplified format of a semantic link by omitting the type of nodes. The schema defines the legal types of the link and the node of semantic links in form of $[a-\gamma\rightarrow b]$, which defines that a link with the starting node of type a, the link of type $\gamma$  and the target node of type b is a legal link. The reasoning rules on semantic links are also defined in schema. For example, the reasoning rule defines that: a link of type  adjoin to a link of type $\alpha$ implies a link of type $\beta$, where $\alpha$ and $\beta$  ,  and $\gamma$ are link types \cite{RN7} and the operator "$.$" indicates the adjacency of two links. The semantic link network model can be used as a lightweight semantic representation model to represent logistic transportations.  
To represent logistic objects, a logistic transportation semantic link (link) $v_i - \langle l, d\rangle \rightarrow v_j$ is used for representing a logistic transportation that delivers the logistic object d from $v_i$ to $v_j$, where $d$ is the logistic object ID, $l$ indicates the link ID, $v_i$ represents the starting party of the logistic transportation and $v_j$ represents the target party of the logistic transportation. Then, the semantic links are published in blockchain transactions for implementing traceability.

\begin{figure}[ht]
\centering
\includegraphics[width=0.9\textwidth]{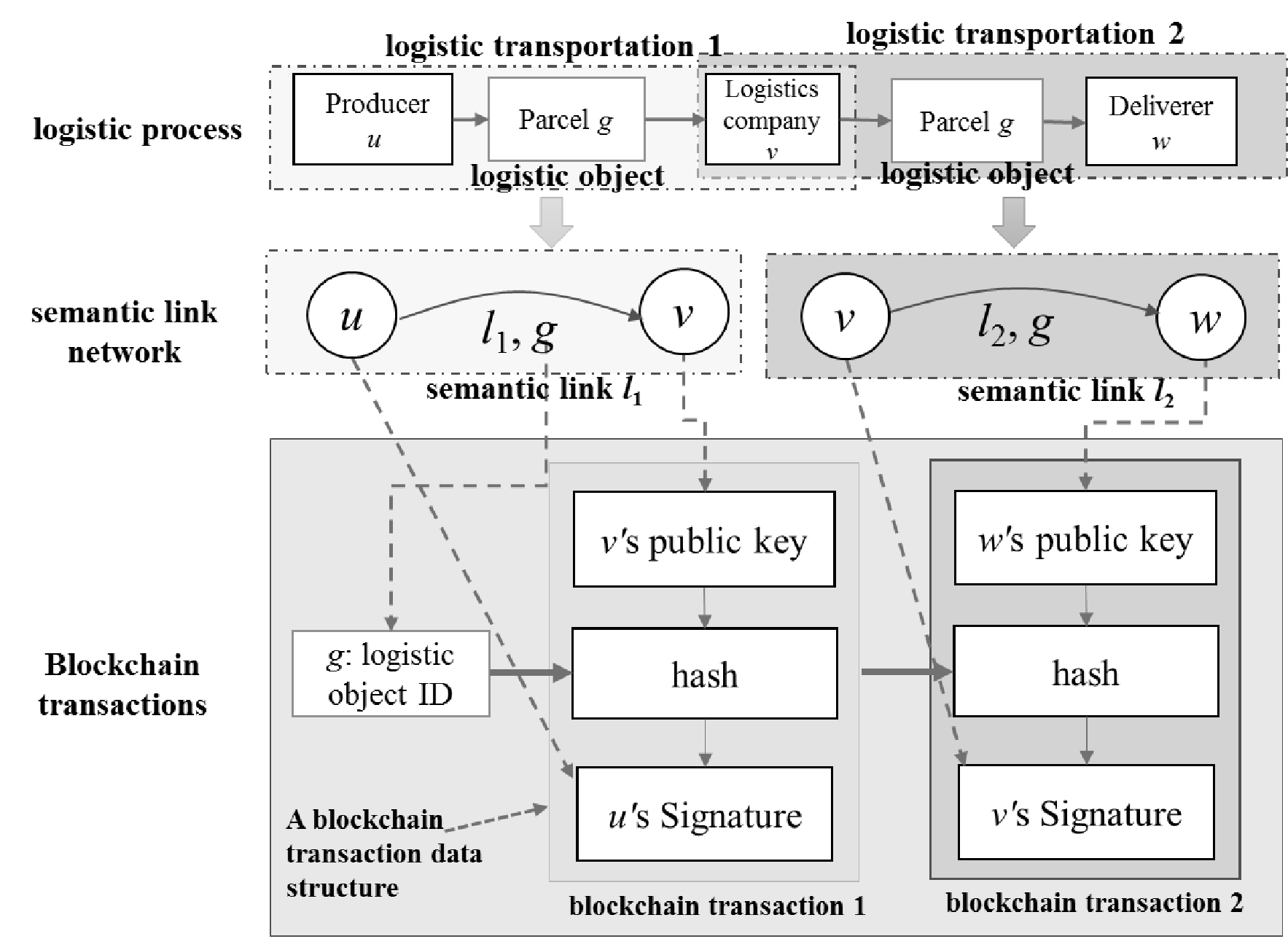}
\caption{Two logistic transportations are published as two blockchain transactions on a transaction list}\label{fig1}
\end{figure}
blockchain transaction list.
Fig. 1. shows an example consisting of two logistic transportations in a logistic process. The first transportation transfers the parcel $g$ from the producer $u$ to the logistic company $v$ and the second logistic transportation transfers $g$ from $v$ to the deliverer $w$ (a transportation company). The two logistic transportations are represented by two semantic links $l_1$ and $l_2$. Then, the two semantic links are published in two blockchain transactions to record the corresponding logistic transportations. The starting node of the link is mapped into the payer account of the blockchain transaction and the ending node of the link is mapped into the payee account of the block transaction. The logistic object ID is treated as a virtual asset ID of the blockchain transactions. One logistic transportation is recorded in one blockchain transaction. Two blockchain transactions are connected by the hash pointer of the previous one that transfers the same logistic object ID, which forms a blockchain list that can be used to trace the logistic transportations. The hash function in the first transaction block (in yellow color) is to take the payee's public key ID ($v$) and the object ID to be transferred as inputs to produce a signature that can be decoded correctly by only the receiver's private key. Then, the first transaction block can be linked to the second transaction block by inputting its hash digest and the public key of the second payee w to the hash function to produce the second signature in the second block, which ensures that only payee w can decode the second block. This chain of transactions ensures that only receiver can verify the current transaction and make the next transaction. The traceability can be conducted on each transaction on the chain by the transaction's owner.

Fig. 2 shows an example of SLN that records a logistic process $L$.  The left-hand side lists 7 logistic transportations in $L$.  The first transportation is started from the customer $v_1$ with the e-business platform $v_2$ for transferring the purchase order from $v_1$  to $v_2$.  The type of the first transportation is $place\_order$ which indicates that the platform receives an order from a customer. The logistic object $d_1$ represents the order from the customer $v_1$ to the platform $v_2$. After receiving the order, the e-business platform $v_2$ issues an order of shipment $d_2$ to the logistic company $v_3$ in the second transportation. Then, $v_3$ issues a shipment order $d_3$ to the deliverer $v_4$, which fetches the parcel $d_4$ from the warehouse $v_5$ in the fourth transportation and transport it to city A, city B and city C in following three transportations. The logistic company $v_3$ also transfers an application document $d_5$ to the custom department $v_9$ for custom clearance. The logistic transportations are mapped onto a set of links to form a SLN shown in the upper part of the right-hand side. For example, the first transportation is represented by a link $v_1 - \langle l_1, d_1 \rangle \rightarrow v_2$, where $v_1$ is the customer ID, $v_2$ represents is e-business platform ID, $l_1$ consists of the ID, the type and other attributes of the logistic transportation to be recorded such as the costs of transportation, the deadline and the constraints, and $d_1$ is the logistic object ID that represents the electronic contract document sent from $v_1$ to $v_2$.  Semantic links are published in blockchain transactions by using the logistic object ID as the virtual asset ID to be transferred in a blockchain transaction. All the logistic objects are issued by a root account that represents the whole logistic process ($T_1$ in Fig 2).
 \begin{figure}[ht]
\centering
\includegraphics[width=0.9\textwidth]{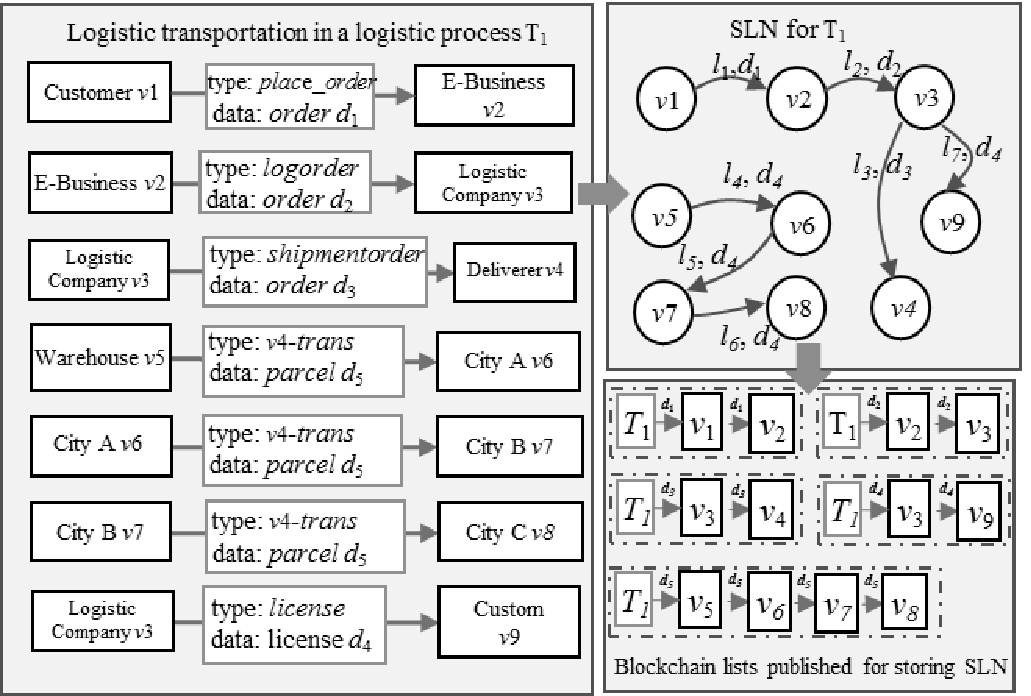}
\caption{Example of SLN on a blockchain}\label{fig2}
\end{figure}

As shown in the right-hand bottom part of Fig. 2, five blockchain lists are constructed, each having the logistic process ID $T_1$ as the root account.  The first blockchain list corresponds to the transportation of purchase order from $v_1$ to $v_2$, where the logistic object $d_1$ is the purchase order. The first arrow is the blockchain transaction transferring $d_1$  from the root account of $T_1$ to $v_1$. Then, the second blockchain transaction represents the transference of $d_1$ from $v_1$ to $v_2$, which represents the link $v_1 - \langle l_1, d_1 \rangle \rightarrow v_2$ of the first logistic transportation. The transportation of the logistic object $d_2$ is recorded in the second blockchain list where the link $v_2 - \langle l_2, d_2 \rangle \rightarrow v_3$ is published in a blockchain transaction from $v_2$ to $v_3$. Similarly, the logistic transportations of d5 forms a blockchain list with three blockchain transactions to represent three links: $v_5 - \langle l_4, d_4 \rangle \rightarrow v_6$, $v_6 - \langle l_5, d_5 \rangle \rightarrow v_7$, and $v_7 - \langle l_6, d_4 \rangle \rightarrow v_8$. Note that all blockchain lists are started from the root account $T_1$, which provides a unique access point for obtaining links in the logistic process $T_1$.
However, a logistic transportation has multiple states indicting the status of the transportation. For example, the initialized state indicates that the transportation is to be started, the transporting state indicates that the transportation is ongoing and the parcel is the path from the source to the destination, the succeeded state indicates that the transportation is successfully completed, the failed state indicates that the transportation is failed and the End state indicates that the whole transportation is terminated and the parcel will not be transported further. One transportation can be started only after its predecessor transportation is successfully completed. States of logistic transportations should be traced. Simply storing one logistic transportation in a blockchain transaction cannot reflect the state of a logistic transportation. A state model is required to represent the logistic transportation states in a logistic process\cite{RN39}. We design a state model on the SLN for representing the state of logistic transportations and mapping links to blockchain transaction according to the states of links.
 \begin{figure}[ht]
\centering
\includegraphics[width=0.9\textwidth]{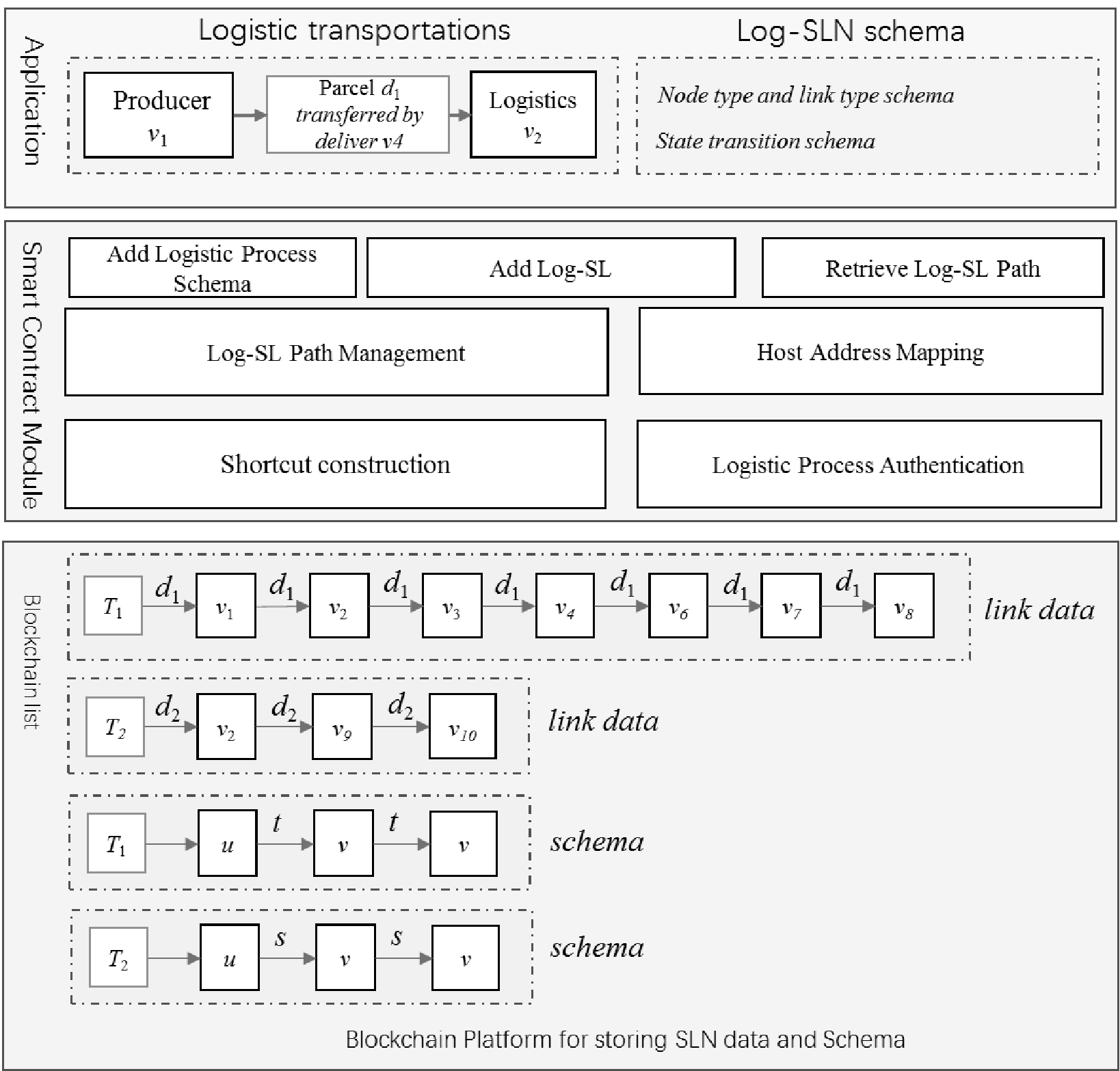}
\caption{The framework for implementing the proposed model.}\label{fig3}
\end{figure}
Fig. 3 shows the basic framework of publishing links on a blockchain platform to support tracing logistic transportations. The application layer provides the user interface for inputting and querying the transportation states and locations of logistic objects. The middle layer is based on the smart contract technique to implement link publishing, querying and validating and confirmation on the blockchain. The host address mapping module is used to determine the blockchain transaction account when publishing a link. The logistic process authentication module is used to release authentication to the legal accesses to the blockchain platform. The shortcut construction module is used to improve the efficiency of tracing the path of links by building short cut links among nodes in the SLN. The bottom layer is the blockchain platform such as BigchainDB\cite{RN2} that is used to store and access SLN data and schema by using the logistic object IDs as virtual assets of transactions.
\subsection{Semantic Link Network Model for Logistic Transportation}\label{subsec2-2}

A logistic transportation semantic link (link) uses a directed semantic link with a logistic object ID to represent a transportation of a logistic object (such as a parcel or a document) from the source party to the target party in a logistic process. A set of links forms a logistic-transportation Semantic Link Network of the logistic process. 
\begin{definition}
     A logistic transportation Semantic Link Network (SLN) is defined as: $G=\langle V,D,E\rangle$, where:
$V$ is the set of unique IDs of the nodes of the network,
$D =\{d_1, d_2, ..., d_n\}$is the set of logistic object IDs.
\(E=\{v_i-<l,d>\rightarrow v_j | v_i, v_j \in V, d \in D\}\) is the set of logistic transportation semantic links (links) where  $v_i - \langle \text{l, d} \rangle \rightarrow v_j$ is a semantic link from node $v_i$ to $v_j$ attached by a logistic object ID $d$ and $l$ is the data structure that consists of the unique ID of the link, the link type, the link state and the attached data of the logistic transportation including the cost of the transportation, the time constraints and other user-defined data structures for the logistic transportation. 
\end{definition}

A logistic process is represented by a SLN, where each party in the logistic process is uniquely represented by one node ID in $V$. The logistic object IDs are used to uniquely identify the parcels and the necessary documents to be transported. A link $v_i - \langle l,d \rangle \rightarrow v_j$ represents a logistic transportation $l$ that delivers a logistic object $d$ from starting node $v_i$ to the target node $v_j$ in $l$. 
\subsection{Location Representation of a Logistic Object}\label{subsec3}
The nodes and links in a SLN are used to represent the locations of a logistic object in a logistic process. That is, in a SLN $G=\langle V, D, E\rangle, S = V\cup E \cup \{End\}$ is the location space of logistic objects in $D$. A movement of a logistic object $d$ from one location to another location is through a link $v_i - \langle l,d \rangle \rightarrow v_j$, either from $v_i$ to the link $l$, or from the link $l$ to node $v_j$, or from $v_i$ to the End location. A logistic object $d$ can be at one of three possible locations of a link $v_i - \langle l,d \rangle \rightarrow v_j$: the starting node vi, the target node $v_j$ and the link $l$. When $d$ is at the node $v_i$, the logistic object is at the starting node and the transportation is not started yet. When $d$ is at the link $l$, it means that the logistic object is being in the transportation process. When $d$ is at the node $v_j$, the logistic object is at the target party and the transportation is completed. When all transportations of $d$ are completed, $d$ is at the End location and no more transportations of $d$ are required. 
A logistic object d is transported from its starting location to a target location through a path of n logistic transportations, which corresponds to a link path $L(d)=\{v_1 - \langle l_1,d \rangle \rightarrow v_2, v_2 - \langle l_2,d \rangle \rightarrow v_3, ..., v_{n-1} - \langle l_n,d \rangle \rightarrow v_n\}$. As a logistic object d is transported thought a path of semantic links, its locations is represented by $d(1), d(2), ..., d(n)$, where $d(n)$ is the latest location of $d$.
\subsection{Location Representation of a Logistic Object}\label{subsec4}
A link has five states: \textit{Init}, \textit{Transporting}, \textit{Succeeded}, \textit{Failed}, and \textit{End} to represent states of one logistic transportation: the initial state, the transporting state, the state of successful completion, the failure state, and the end of the logistic transportation. Definition 2 defines the way to determine the state of a link according to the transportation of the logistic object. 
\begin{definition}
    The state process of a link $l \in E$ is a set of indexed states $P_c(l)={l(j) | j \in I={1..n}}$ where $l(j): I\in \{Init, Succeeded, Failed, Transporting, End\}$ is the $j$th state of the link $l$. 
        \begin{itemize}

        \item In the state $l(j)=Init$, the logistic object $d$ of $l$ is going to move from $v_i$ to $v_k$;
        \item In the state $l(j)=Transporting$, the logistic object $d$ of $l$ is being transported through $l$;
        \item from $v_i$ to $v_k$ but the transportation is still not completed;
        \item In the state $l(j)=Succeeded$, the logistic object $d$ of $l$ is successfully moved from $v_i$ to $v_k$; 
        \item In the state $l(j)=Failed$, a logistic object $d$ of $l$ failed to move from $v_i$ to $v_k$; 
        \item In the state $l(j)= End$, a logistic object $d$ of $l$ halts.
    \end{itemize}
 \end{definition}

More states can be added according to application requirements. According to Definition 2, the location of a logistic object can be determined by inference rules.

\begin{definition}
   Location and state inference rules specify the implication relations (denoted as “$\Rightarrow$”) between the states of a link $v_k - \langle l,d \rangle \rightarrow v_j$ and the location of the logistic object $d$, where $l(n)$ denotes the current state of the link $l$ and $d(n)$ denotes the latest location of $d$ at $l$:

$l(n)=Failed \Rightarrow l(n_1) = Init | Transporting and d(n) = vj$;
    \begin{itemize}
        \item $l(n)=\textit{Init} \Rightarrow d(n) =v_k$;
        \item $l(n)=\textit{Transporting} \Rightarrow l(n_1) = \textit{Init} | \textit{Transporting}, d(n_1) = v_k | l and d(n)=l$; \textit{Failed};
        \item $l(n)=\textit{Succeeded} \Rightarrow l(n_1) = \textit{Transporting}, d(n-1) = l and d(n) = v_j$;
        \item $l(n)=\textit{End} \Rightarrow l(n_1) = \textit{Init} | \textit{Succeeded} | \textit{Failed}, d(n) = \textit{End}, and d(n+1) = \textit{End}$;
        \item $l(n)=\textit{End} \Rightarrow l(n_1) = \textit{Init} | \textit{Succeeded} | \textit{Failed}, d(n) = \textit{End}, and d(n+1) = \textit{End}$.
    \end{itemize}
 \end{definition}

\begin{figure}[ht]
\centering
\includegraphics[width=0.5\textwidth]{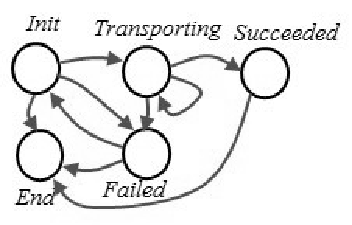}
\caption{Link state transition schema}\label{fig4}
\end{figure} 

According to the inference rules in Definition 3, the legal transitions of link states from the current state to the next state are shown in a schema. Fig. 4 illustrates the state transition graph corresponding to the schema. When the state of a transportation is in state Failed, the object ID is transferred to the payee. This is because when a transportation is confirmed to be Failed, both the payer and payee should accept that state.
\begin{definition}
    The schema of the state transition of a semantic link regulates the legal transitions as follows:
    \begin{itemize}
        \item \textit{Init} $\Leftarrow$ \textit{Transporting} $|$ \textit{Failed} $|$ \textit{End};
        \item \textit{Transporting} $\Leftarrow$ \textit{Transporting} $|$ \textit{Succeeded} $|$ \textit{Failed};
        \item \textit{Failed} $\Leftarrow$ \textit{Init} $|$ \textit{End};
        \item \textit{Succeeded}$\Leftarrow$ \textit{End};
    \end{itemize}
 \end{definition}

In the schema, $l(n) = s \Rightarrow l(n+1) = t_1 | t_2|...| t_k$ indicates that if the current state is $s$, then, the following state can be one of $t_1, t_2, ..., t_k$. The schema can be extended according to the increase of states. According to Definition 3, each state of a link has exactly one corresponding object location and the object location can be derived from the states of links. Thus, the platform designed according to the model only needs to publish and record links and their states on blockchain.
To ensure correct state transition, the semantic link network schema rules are used to define the transition schema such that the consistency of state transitions can be verified. Specifically, the semantic link network schema defines the implication rules between two links that share one node. The state transition links are used to define those implication rules among links such that only those links that can be reasoning out are legal transitions.

A schema rule of SLN is defined in the following form:
$\langle u, l, v \rangle$ and $\langle w, t, e \rangle \Rightarrow \langle u, r, e \rangle$, where $u$, $v$, $w$ and $e$ represent any node in SLN, and $l$, $t$, and $r$ represent three types of link. The schema indicates that if there are two links of type $l$and $t$ connecting nodes from $u$ to $v$ and from $w$ to $e$, then they imply a link of type $r$ from $u$ to $e$. 

A SLN schema defines a reasoning rule that can be used to derive new links from the existing links according to the link types. The state transition schema of logistic semantic links can be modeled by treating the state of a link as a type of a link. Then, a link in different states can be treated as a set of links with different types.

A semantic link schema $\langle u, s, d, v \rangle$ of SLN matches a link $v_i-<l,d>\rightarrow v_j$ with $l(n) = s$, where $s \in \{\textit{Init}, \textit{Succeeded}, \textit{Failed}, \textit{Transporting}, \textit{End}\}$.

A state transition $l(n) = s \Rightarrow l(n+1) = t$ is represented by a semantic link reasoning schema $<u, s, d, v>\Rightarrow<u, t, d, v>$. Then, a state transition schema $l(n) = s \Rightarrow l(n+1) = t_1  | t_2 | ... | t_k$ is represented by a semantic link reasoning schema: \\
$<u, s, d, v>  \Rightarrow <u, t_1, d, v> | <u, t_2, d, v> | ... | <u, t_k, d, v>$. 

The schema for describing the state transition between two different links can be defined as: \\
$<u, s, d, v> \Rightarrow <v, t, d, w>$.

For example, $<u, \textit{Succeeded}, d, v>  \Rightarrow <v, \textit{Init}, d, w>$ indicates that only when a semantic link for transferring $d$ from $u$ to $v$ is succeeded, the next link can be started.
The reasoning procedure for checking the consistence of the state transition between two links is to check if there is a rule that matches the states of the two links. For the state reachability checking, the transitive closure of the state transition graph is obtained and is used to check if two states are reachable according to the state transition schema.

Each state transition should adhere to the schema. The schema rules are published in a set of blockchain transactions from the root node T. The starting state of a rule represents the payer account and the next state represents the payee account, allowing one to obtain the schema rules by inputting the starting state. A new schema rule can be published to replace an existing schema by appending it to a blockchain transaction that follows the existing schema rule. This ensures that the latest schema rule is always held in the last blockchain transaction.

\section{Publishing and Tracing Links on Blockchain}\label{sec4}
A link is published as a blockchain transaction on a blockchain platform to implement the decentralized traceability of logistic transportations by tracing published links for transportations of a logistic object. A mapping from a link to a blockchain transaction is required to determine the payer and the payee account of the blockchain transaction for storing the link according to the state of the link as the state updating is realized by publishing new blockchain transactions. On a public blockchain platform, transactions can be verified and processed only by its owners. After a new transaction is correctly produced and published by its owner on a server, it will be broadcasted to all servers for verification and acceptance. The authentication is ensured by the transactions' signatures produced by previous transaction (see Figure 1). That is, only the owner of the signature can correctly manipulate a transaction. Usually, a client-side system is provided with user account and password management to allow users to manage its own transactions. On a blockchain platform, the first transaction is made for a newly produced bitcoin by the client software to transfer the bitcoin to it first owner (i.e., the miner). In our proposed system, the first transaction is made by the payer who own the logistic object ID and make the first logistic transportation requirement by transferring the logistic object to the next payee through a logistic process, which is recorded on a semantic link with states in a transaction. The authentication is similar to the original blockchain that uses the signature of payee to ensure the correct manipulation of a logistic transaction with a semantic link.
\subsection{Publishing Link on Blockchain}\label{subsec4-1}
Let a tuple $B=\langle a, c, b, e \rangle$ be a blockchain transaction that is published and stored on the blockchain platform to record a transference of a bitcoin value $c$ from the payer account a to the payee account $b$ with e being extra tags of the type, time, and other attribute values of the transaction such as the transaction time. According to the authentication mechanism of blockchain, when the block transaction $B$ is published on the blockchain, it is the host (the people who own the account) of the payee account $b$ which obtains the authorization (owning the encryption private key of the blockchain transaction $B$) to access and encrypt the block transaction $B$. That is, the host of the account $b$ becomes the current host of the bitcoin c and can make a next transaction to transfer $c$ to another account by applying the private key of $b$ to sign the transaction. 
A link $u-<l,d>\rightarrow v$ can be published in a blockchain transaction $B=\langle a, c, b, e \rangle$ by mapping $u$ to $a$, $d$ to $c$, $v$ to $b$ and $l(n)$ to $e$. The bitcoin $c$ of the blockchain transaction is replaced by the logistic object ID $d$, the starting node $u$ is the payer account, the target node $v$ corresponds to the payee account of the blockchain transaction and the state of $l$ is stored in $e$. To access the block transaction $B$ for obtaining the link $u-<l,d>\rightarrow v$, one needs to obtain the authorization from the host of the payee account $b$, which ensures the protection of published links. 
States of a link is published in new blockchain transactions. The location of the logistic object of a link can be used to determine the mapping from the link to a block transaction according to the states of the link. 

\begin{definition}
    A link $u-<l,d>\rightarrow v$  is published in a blockchain transaction $B=<u, d, H(l, d(n)), l(n)> $where $u$ is the payer account, the bitcoin is replaced by the logistic object ID $d$, $l(n)$ is the link state stored, and $H(l, d(n))$ is the mapping determines the payee account of the blockchain transaction that holds the link according to the location of the logistic object $d(n)$:
    \begin{itemize}
        \item $H(l, d(n)) = u\, \textit{if}\, d(n) = u$;
        \item $H(l, d(n)) = u\,  \textit{if}\, d(n) = l$;
        \item $H(l, d(n)) = v\,  \textit{if}\, d(n) = v$;
        \item $H(l, d(n)) = H(l, d(n-1))\,  \textit{if}\, d(n) = \textit{End}$, which means that the Ending state is published in the host same as the host of its previous link.
    \end{itemize}
 \end{definition}
When publishing a link in a blockchain transaction, the starting node of the link is used as the payer account of the blockchain transaction. The payee account is determined by the location of the logistic object. When the payee account is the same as payer account, the host of the payer account publishes a blockchain transaction to itself. When the logistic object is moved to the End location, it will not move further. The host of the blockchain transaction for the previous link will be responsible for publishing the last link that has the End as the target location.

Before publishing a link $u-<l,d>\rightarrow v$ for the first logistic transportation in a logistic process $T$, a root account is created by using the logistic process ID T as a root account ID on the blockchain platform. Then, a blockchain transaction $B_0=<T, d, u, \textit{Succeeded}>$ is published to transfer $d$ from the root account $T$ to the account $u$. The first blockchain transaction does not store any link. It has a default state of Succeeded. It ensures that links of logistic transportations in one logistic process are stored in blockchain lists with the same starting root account $T$. Thus, accesses to links of the same logistic process $T$ start from the same root account $T$ on the blockchain platform. Logistic transportations in different logistic processes are published in different blockchain lists. After the first blockchain transaction from the root account T is published, the blockchain transaction $B_1=<u, d, H(l, d(n)), l(n)>$ is published to store the link $u-<l,d>\rightarrow v$ according to the current link state $l(n)$ and the location $d(n)$ of $d$. The following links will be appended to the last block transaction, which forms a blockchain list for representing a transportation path of the logistic object. Each payer account of a link is the payee account of its previous link, which can ensure the correctness of the published sequence of links in the blockchain list when multiple links are published simultaneously.
The original blockchain platform does not allow double pay of one bitcoin.  However, a logistic object d can be transported to multiple target nodes from the same starting node, which requires to publish multiple links that have the same payer account, the same logistic object but the different payee accounts. For example, an application form document is sent to different departments for verification. To skip the double pay restriction of the blockchain platform, we use a suffix naming strategy to allow a logistic object d to be published in multiple blockchain transactions from one payer account to multiple payee accounts. Assuming a logistic object d at node u is transferred to k target nodes in k links:  $u-<l_1,d>\rightarrow v_1$, $u-<l_2,d>\rightarrow v_2$, ..., $u-<l_k,d>\rightarrow v_k$, then $k$ blockchain transactions are published for $B_1=<u, d.v_1, v_1, l_1(n)>$, $B_1=<u, d.v_2, v_2, l_2(n)>$, ..., $B_k<u, d.v_k, v_k, l_k(n)>$, where the $ith$ blockchain transaction $B_i = <u, d.v_i, v_i, l_i(n)>$ uses the $d.v_i$ as the logistic object ID to be transferred in the blockchain transaction from $u$ to $v_i$. When a logistic object ID with suffix is transferred in a blockchain transaction that has the same payer account and payee account, the payee account ID is appended to the suffix ID of the logistic object so that two links that are published in two blockchain transactions with the same payer and payee account will have different logistic object IDs. For example, assuming two links $u-<l_1,d>\rightarrow v_1$ and $u-<l_2,d>\rightarrow v_2$ are published with $l_1(1)= \textit{Init}$ and $l_2(1)= Init$, two blockchain transactions are published: $B_1=<u, d.v_1, u, l_1(1)>$ is published for $l_1$, and $B_2=<u, d.v_2, u, l_2(1)>$ is published for $l_2$. When the states of $l_1$ and $l_2$ are updated to Transporting, $B_3=<u, d.v_1, u, l_1(2)>$ and $B_4=<u, d.v_2, u, l_2(2)>$ are published. Appending the target node ID as the suffix to the logistic object ID of the blockchain transactions can avoid the double pay problem. Each time when a logistic object $d$ is transferred in a new link, the target node ID is appended as a suffix to the current logistic object ID in the blockchain transaction. 
\begin{figure}[htbp]
\centering
\includegraphics[width=0.8\textwidth]{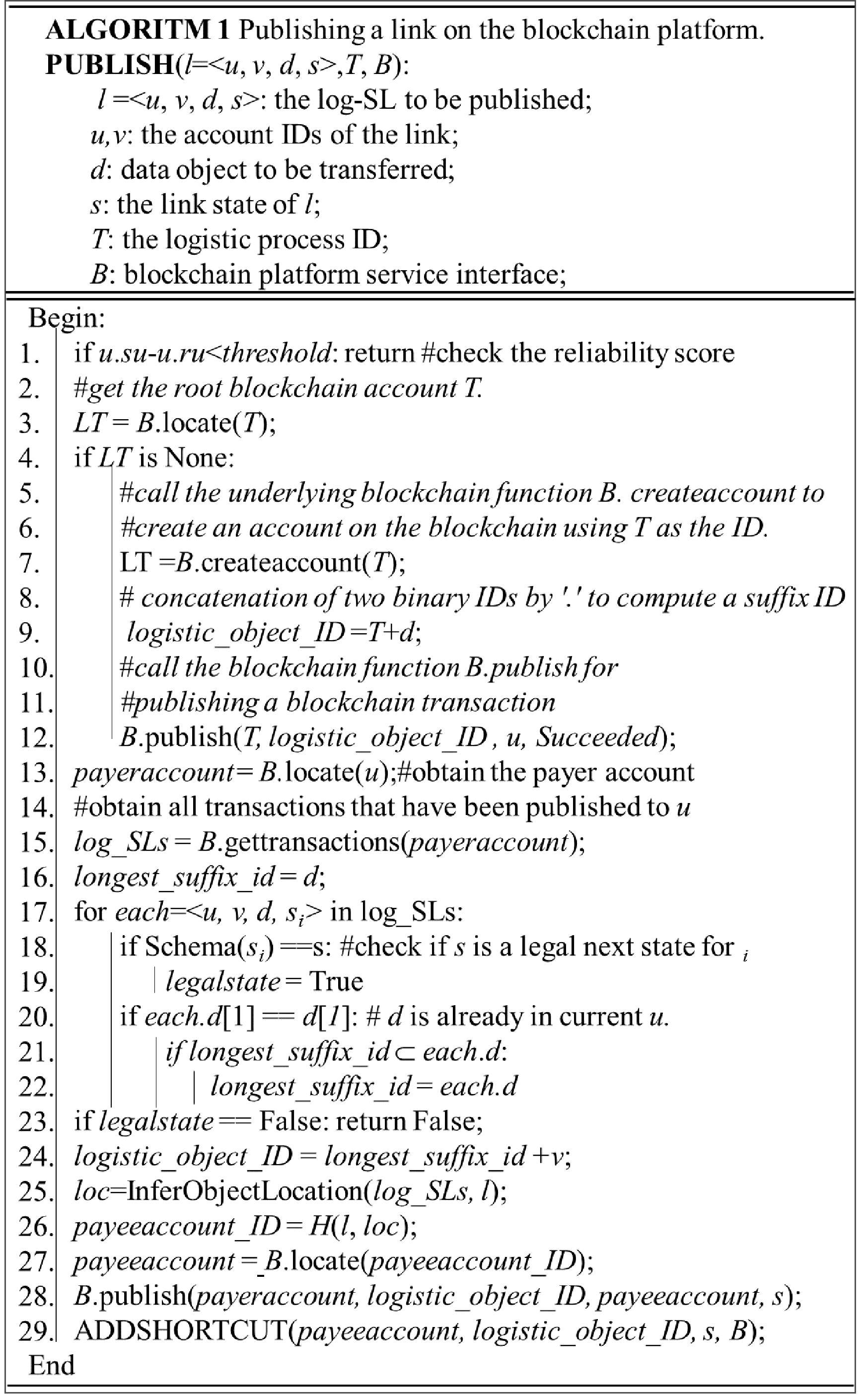}
\caption{Algorithm for publishing a link on blockchain}\label{fig5}
\end{figure}

In this way, the suffix of a logistic object $d$ is extended as $d$ is transferred through a link path, which records the sequence of the nodes that the logistic object $d$ has been transported through. For example, $d.v_1.v_2...v_k$ represents that the logistic object $d$ has been transferred through nodes $v_1$, $v_2$, ..., and $v_k$ in a sequence of links, where $v1$, $v_2$, ..., and $v_k$  are the target node of links. The suffix of a logistic object ID can be used as a record of the logistic transportation path that the logistic object has been went through. Tracing the process of a logistic object path starts from $T$ and follows the published links to verify each link until reaching the end of the path. Shortcuts from the current node to a remote node on the path can be added to improve the efficiency of the tracing process.
The algorithm for publishing a link on the blockchain platform is shown in Algorithm 1 in Fig. 5. The input arguments include the link $l$, the logistic process ID $T$ and the blockchain platform service $B$. First, the reliability score is computed to check if the account $u$ is allowed to publish a new link according to a penalty-reward policy defined in section IV.C, where $u.s_u$ and $u.r_u$ are the trustiness score and the responsibility score of $u$. Then, the root account $T$ of the logistic process is located (line 3). If there is no such a logistic process, the root account $T$ is created and the first blockchain transaction from $T$ to $u$ with logistic object $d$ is published on the blockchain (line 4 to line 12). The payer account is located for $u$ and all links that have been published from the payer account $u$ are obtained in line 15. Then, the longest suffix ID of the logistic object $d$ is obtained from line 16 to line 22 and the new suffix ID of $d$ is built by appending the ending node ID $v$ to the longest suffix ID in line 24. In this way, the logistic object $d$ will be published with the longest suffix ID that records the path of the transportation of $d$ and the longest suffix ID will be used to add shortcut links in ADDSHORTCUT in line 29. After the suffix ID is obtained, the location $loc$ of $d$ is obtained by the function InferObjectLocation through the state of the link $l$ in line 25, the payee account is located by the function $H(l, loc)$ according to the rules in Definition 5 in line 26. Finally, the blockchain transaction is published in line 28 from the payer account of u to the payee account of $v$ using the suffix ID. When checking each existing link, Schema($s_i$) is called in line 18 to verify if the current link to be published is a legal next state of an existing link. Schema($s_i$) loads the Blockchain transactions from the root node $T$ with starting account $s_i$ to determine if $s$ is its next states. If $s$ is not a legal next state, the publishing is halted in line 23.
The implementation of Algorithm 1 depends on the underlying blockchain platform to provide blockchain transaction publishing services listed in Table 1 for supporting blockchain account creation and blockchain transaction publishing and retrieval. 

\begin{table}[h]
    \caption{Functions provided by blockchain for publishing links as blockchain transactions}\label{tab1}%
    \begin{tabular}{|p{0.4\textwidth}|p{0.5\textwidth}|}
    \toprule
    Functions provided by the Blockchain platform B & Description  \\
    \midrule
    \textit{account}=B.locate(\textit{ID}) & Locate the account by its ID on the blockchain and obtain its access authority to publish a blockchain transaction using host as the payer host argument. \\
    \textit{account}=B.createhost(\textit{ID})   & Create a new account on the blockchain using ID as its host ID.   \\
    blockchainlist=B.locatechain(ID)    & Obtain blockchainlist starting from the host ID.    \\
    B.publish(\textit{payerhostid},\textit{payeehostid}, d)    & Publish a blockchain transaction from \textit{payerhostid} to \textit{payeehostid} with d being the transferred data object.    \\
    \textit{bool}=B.call(\textit{schema}, \textit{blockchainlist}, \textit{l})    &Call schema validation published as smart-contract service with address schema on the blockchain platform to validate the state of link l in a blockchain list  blockchainlist.    \\
    \textit{transactions} = B.locatetransactions(\textit{u}, \textit{v}, \textit{d})    & Obtain the blockchain transactions send from host u to host v with data object d.    \\
    \botrule
    \end{tabular}
\end{table} 
One can obtain the starting blockchain transactions of logistic objects by accessing the root account $T$ of the logistic process and then obtain links in the following blockchain transactions by tracing along the block chain lists starting from $T$. Thus, it assumes that the users who can start a logistic process should have the account $T$ and can access the account $T$ with the corresponding private key on the blockchain. The tracing process can be realized by following the hash pointers back to the payer account and use the public key of the payer account to verify the signature of the blockchain transaction. However, if one needs to check the detailed semantic link data with the real record of the logistic transportation, it needs the payee account to decrypt the blockchain transaction and obtain the semantic link data structure. Due to the lack of centralized server, one must get the authorization from the host of the payee account of each blockchain transaction to access the links and verify the semantic link and the corresponding logistic transportation, which could take times.
Fig. 6 shows an example case of publishing links on the blockchain platform for recording logistic transportation states. A logistic process $T$ is started by a logistic company $v_1$. Three logistic transportations are started from $v_1$: $l_1$ sends a parcel $d_1$ via a deliver, $l_2$ sends an application document $d_2$ from $v_2$ to the custom department $v_6$, $l_3$ sends a contract document $d_3$ to the manufactory $v_4$ and $l_4$ sends the same contract $d_3$ to the custom department $v_6$. 

\begin{figure}[htbp]
\centering
\includegraphics[width=0.9\textwidth]{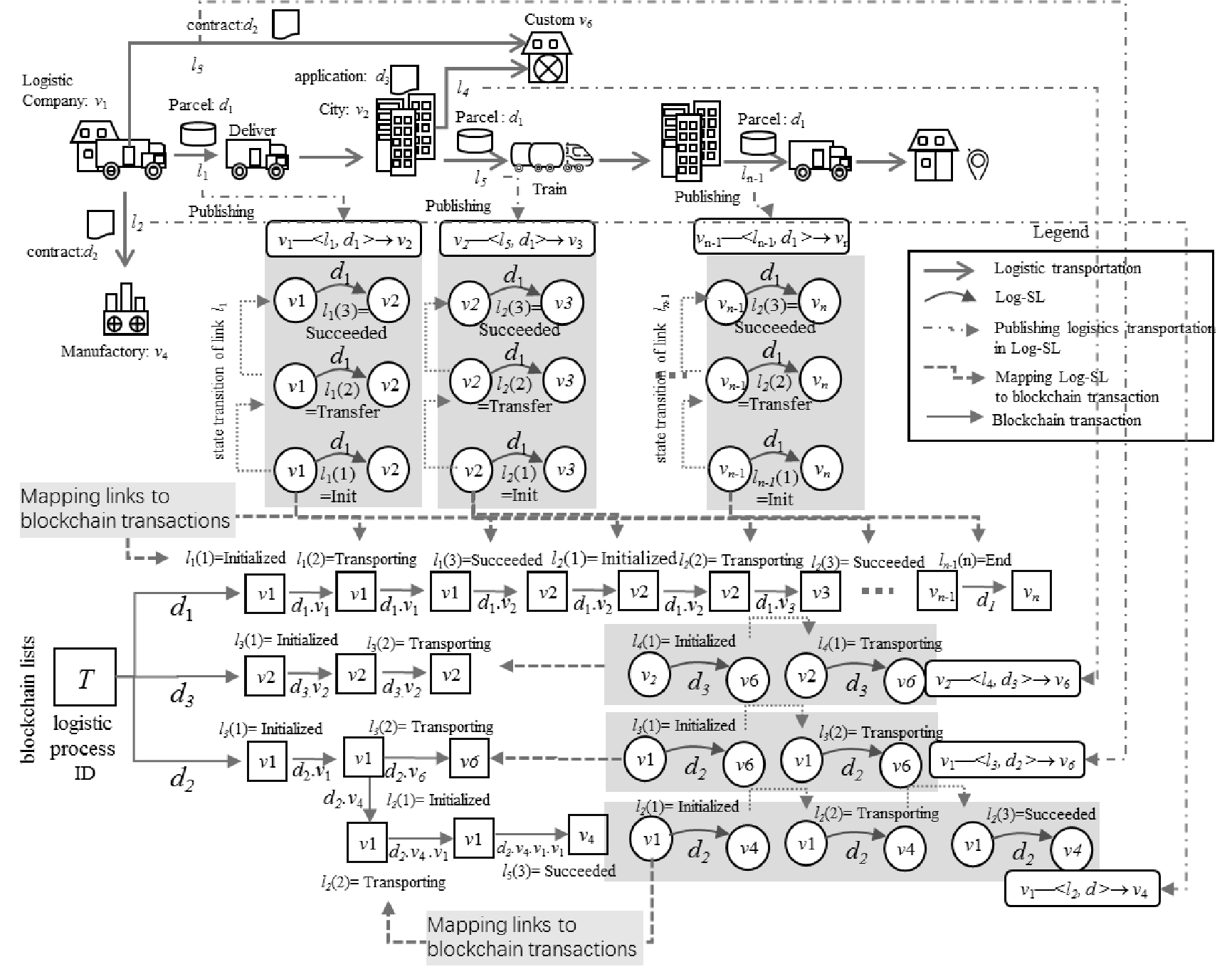}
\caption{Logistic process published on blockchain lists.}\label{fig6}
\end{figure} 

When the logistic process is initialized, a root account $T$ is first created on the blockchain platform to represent the starting account of the whole blockchain list. Logistic objects $d_1$, $d_2$ and $d_3$ are transferred from the root account $T$ in three blockchain transactions to $v_2$, $v_4$ and $v_6$, which forms three blockchain lists that allow users to trace logistic transportation paths of $d_1$, $d_2$ and $d_3$ from $T$ when the authorization of accessing $T$ is obtained.
The first logistic transportation $l_1$ for transporting a parcel $d_1$ is represented by a link $v_1-<l_1,d_1>\rightarrow v_2$. The first state of $l_1$ is Init. According to Definition 5, the payee account is still $v_1$, thus the first blockchain transaction $B_1=<v_1, d_1, v_1, l_1(1)>$ is published to store $v_1-<l_1,d_1>\rightarrow v_2$ with $l_1(1) = \textit{Init}$. After the state of $l_1$ is updated to Transporting, $v_1$ is used again as the payee account and the block transactions $B_2=<v_1, d_1, v_1, l_1(2)>$ is published to record $l_1$ with $l_1(2)$ being Transporting. When publishing $l_1$ with $l_1(3$) being Succeeded, $v_2$ becomes the payee account and the blockchain transaction from $v_1$ to $v_2$ is published as$B_3=<v_1, d_1, v_2, l_1(3)>$. After $l_1$ is completed with the Succeeded state, the next transportation from $v_2$ to $v_3$ is started. The link $v_2-<l_2, d_1>\rightarrow v_3$ is published for recording the transportation of $d_1$ from $v_2$ to $v_3$. The first state of $l_2$.is the Init state and a blockchain transaction $B_4=<v_2, d_1, v_2, l_2(1)>$ is published to store the link and its current state. Then, the next two states of $l_2$ are published in two blockchain transactions $B_5=<v_2, d_1, v_2, l_2(2)>$ and $B_6=<v_2, d_1, v_3, l_2(3)>$. Finally, the blockchain transactions for storing the links that transfer $d_1$ forms a blockchain list. 
The logistic object $d_2$ represents the application form transferred from $v_2$ to the nation custom office $v_6$ for custom clearance. The two first states of the link $l_3$ are published in two blockchain transactions from $v_1$ to $v_1$, which forms the blockchain list starting from $d_2$. 
The logistic transportation $l_3$ and $l_4$ correspond to the transportation of the contract represented by $d_2$ from $v_1$ to the manufacturer $v_4$ and from $v_2$ to the custom office $v_6$ respectively. They are recorded in two links. The branch in the blockchain list represents that the logistic object $d_2$ has multiple targets. The suffix naming schema is applied to give the logistic object $d_2$ two different IDs $d_2.v_4$ and $d_2.v_6$ when publishing the two links from $v_1$ to $v_6$ and $v_7$.

\subsection{Traceability of Semantic Links on Blockchain}\label{subsec4-3}
After publishing links in blockchain transactions, tracing logistic transportations can be implement by collecting the states of links of the logistic transportations published on blockchain platform. Different levels of traceability can be supported with different costs for collecting states of links published for recording logistic transportations.
(1) \textit{Transportation state traceability}.  Checking the current state of a logistic transportation requires to obtain a link and check its latest state by locating the blockchain transaction published for recording the latest state of the link. For example, checking if a logistic transportation of a parcel to a customer is Succeeded or Failed can be implemented by finding the link that has the target node being the customer and checking whether the link state is Succeeded. This can be implemented by locating the payee account of the blockchain transaction that holds the link and obtaining the links that has been published by the payee account.
(2) \textit{Path state traceability}.  Checking the states of a set of logistic transportations that a logistic object has transported through. A link path corresponds to such a set of logistic transportations published in blockchain transaction in one blockchain list. Implementing the path state traceability requires to obtain the sequence of links and their corresponding states, which needs to obtain all blockchain transactions in a blockchain list from the root account of the logistic process.
 (3) \textit{Logistic process traceability}.  This requires obtaining all blockchain lists that are published for recording transference paths of all objects in a logistic process. 
Different levels of traceability require to obtain different numbers of blockchain transactions. One of the major challenges is to improve the efficiency of tracing the path of a logistic object. The natural way is to traverse the whole block transactions for obtaining all links on a path, which could be time consuming because checking semantic links stored in a blockchain transaction requires obtaining the authority of the hosts of the payee account of blockchain transactions.  To make the process more efficient, one way is to reduce the number of accesses to accounts and another way is to let the tracing process be conducted in parallel.
\subsection{Reducing the Average Accesses to Account}\label{subsec4-4}
To reduce the number of accesses to the hosts while still maintaining a certain level of decentralization, it is necessary to keep balance between the number of accesses to the hosts and the decentralization of storage. According to the location mapping in Definition 5, in two cases of the location of $d$, $u-<l_1,d>\rightarrow v_1$ is mapped onto the payer account $u$ and in one case $l$ is mapped onto the payee account $v$.  Due to the location inference rules in Definition 3, the hosts of the blockchain transactions published for a link with different states is unevenly distributed among two computing host of the two accounts. Thus, for obtaining the states of links, the average number of accessing the two hosts is reduced from 2 to 1.4 as shown in Theorem 1.

\begin{theorem}[Theorem subhead]\label{thm1}
Assuming that a query on the k states of a given link $l$ visits each state with the equal probability, the expected times of accessing the hosts of blockchain transactions for $l$ is 1.414, when using the host mapping function in Definition 5 
\end{theorem}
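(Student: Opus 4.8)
The plan is to reduce the probabilistic claim to a finite weighted average of per-state access costs, using Definition 3 to locate the logistic object in each state and Definition 5 to decide which host stores the corresponding transaction. First I would tabulate, for each of the five states of Definition 2, the induced object location: by the inference rules of Definition 3, \textit{Init} puts $d$ at the start node, \textit{Transporting} puts $d$ on the link $l$, and both \textit{Succeeded} and \textit{Failed} put $d$ at the target node, while \textit{End} inherits the location of its predecessor. Composing with the host map $H(l,\cdot)$ of Definition 5 collapses these onto only two physical hosts: the cases $d(n)=u$ and $d(n)=l$ are both served by the payer host, whereas $d(n)=v$ is served by the payee host. I would isolate this $2$-to-$1$ collapse of locations onto the payer side as the structural lemma underlying the whole estimate, since it is exactly what makes the average fall below the naive value of $2$.

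Next I would install the probability model and the cost model and evaluate the expectation. Treating the query as selecting one of the $k$ states uniformly (weight $1/k$ each), I would charge one access when the state resides on the payer host --- which the tracer already reaches while following the chain --- and a second access when the state has migrated to the payee host. Writing $n_u$ and $n_v$ for the numbers of states served by the payer and the payee respectively, the expected number of accesses is $\tfrac{1}{k}(n_u + 2n_v) = 1 + n_v/k$. With the tabulation above placing \textit{Init}, \textit{Transporting}, \textit{End} on the payer side and \textit{Succeeded}, \textit{Failed} on the payee side, this yields $n_v/k = 2/5$ and hence the value $7/5 = 1.4$ against the baseline of $2$, matching the informal figure quoted just before the theorem.

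The step I expect to be the main obstacle is reconciling this clean arithmetic mean with the printed constant $1.414$. Because $1.414\approx\sqrt{2}$ and a plain expectation over integer counts can never be irrational, the stated value must arise from a different aggregation --- most plausibly a root-mean-square over the three location classes, for which $\sqrt{(1+1+4)/3}=\sqrt{2}$. The substantive work is therefore twofold: to decide and justify which aggregation the model actually intends (the arithmetic mean of accesses, giving $1.4$, versus an $L^2$ reading, giving $\sqrt{2}$), and to settle the one genuinely ambiguous case, namely the \textit{End} state, whose inherited host can lie on either side and must be charged consistently. Once the cost convention and the treatment of \textit{End} are fixed, the remaining computation is a one-line weighted sum.
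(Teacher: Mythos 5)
Your route is genuinely different from the paper's. The paper does not charge a cost per state: it asserts that $5$ out of $6$ states of a link are stored at the payer account $u$, models the $k$ queried states as independent draws that land on $u$ with probability $p=5/6$, charges one host access when all drawn states lie on a single host and two accesses when both hosts are needed, and then averages this all-or-nothing cost over query sizes $i=1,\dots,5$ via $E(X)=\frac{1}{6}\sum_{i=1}^{5}\bigl(p^i+(1-p)^i+2\bigl(1-p^i-(1-p)^i\bigr)\bigr)$. Your model instead fixes the state-to-host assignment deterministically from Definitions 3 and 5 and takes a linear per-state expectation. Your structural lemma (locations $u$ and $l$ both collapse onto the payer host) is the same observation the paper starts from, and your deterministic assignment is in fact more faithful to the definitions than the paper's Bernoulli model: by the paper's own rules, \emph{Succeeded} and \emph{Failed} both place the object at the payee, and only \emph{End} is ambiguous (it inherits its predecessor's host), so a $3$-versus-$2$ split like yours is the defensible count, whereas the paper's claimed five-of-six split is not derivable from its own Definition 3.

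On the constant, your suspicion is correct, but your proposed resolution is not what the paper does. Your mean gives exactly $7/5=1.4$, which matches the prose immediately before the theorem (``reduced from 2 to 1.4'') but not the stated $1.414$. The paper's own formula does not give $1.414$ either: each summand equals $2-p^i-(1-p)^i$, and with $p=5/6$ the sum evaluates to roughly $1.13$ under the printed $1/6$ normalization (roughly $1.36$ if one divides by $5$ instead); the figure $1.414$ is simply asserted, not derived. So there is no root-mean-square aggregation hiding behind the theorem, and you should not spend effort reverse-engineering $\sqrt{2}$. Your computation of $1.4$ under an explicitly stated cost model is the strongest defensible version of this statement; the only loose end in your write-up is the charging convention for \emph{End}, which you already flag and which must be fixed (e.g., \emph{End} inherits the payer host) for the count $n_v=2$ to stand.
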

\begin{proof}
From the link inference rules in Definition 3 and location mapping function in Definition 5, one can see that 5 out of 6 states of a link $u-<l_1,d>\rightarrow v_1$ is mapped into the account $u$. Then, if all $k$ states are within the 5 states stored in $u$, then one access to the account $u$ is enough, if there is at least one visit to the state store in $v$, then two accesses are required. The expectation can be obtained by modelling the probability of having one host access and having two hosts accesses as a binomial distribution: \\

$E\left(X\right)=\frac{1}{n}\sum_{i=1}^{n-1}{\left(p^i+\left(1-p\right)^i+\left(1-p^i-\left(1-p\right)^i\right)\times2\right),}$\\

where n=6 and p=5/6.

That is, when the states are stored in an unbalanced way distributed on two hosts of a link, more states of links can be accessed on one host of a transaction that stores more states of links than another. When $p=5/6$, $E(X)$ obtains the minimum.\ 

\end{proof}
\subsection{Improving Efficiency of Path Tracing}\label{subsec4-5}
Although the unbalanced storage distribution of link states can reduce the average accessing times for checking the states of a link, it is still inevitable to traverse each host of account of each blockchain transaction published for a link path when implementing the link path traceability. The total number of accesses is proportional to the number of hosts for storing the path of links. Specifically, to access a path $L(d_i)={v_1, l_1, v_2, l_2, v_3, ..., l_n, v_n}$ of the object di from node $v_1$ to node $v_n$, it first needs to visit the root node to obtain the first blockchain transaction starting from $v_1$ and then it obtains the first link $l_1$ and its corresponding states. Following the blockchain transaction pointer from $v_1$, the next host $v_2$ is accessed to obtain the second link $l_2$ until reaching the final node $v_n$, where all hosts $v_1$, $v_2$, $v_3$, ..., and $v_n$ are accessed sequentially by at least one time and each access needs an authorization of the payer account of the blockchain transaction. It can be time-consuming when sequentially visiting each payee account for verification of blockchain transactions.

To reduce the number of accesses to hosts for getting a link path $L(d)$, a simple way is to separate the whole path $L(d$) into to $\left\lceil n/s \right\rceil$ segments and select the first node in each segment of $s$ nodes as a representative host to hold the links within the segment so that one access to the representative node can obtain links of the whole segment of $L(d)$.  $\left\lceil n/s \right\rceil$ representative nodes are enough to hold all n links and the number of accesses of hosts is reduced to  $\left\lceil n/s \right\rceil$.  However, the total length n of the path cannot be obtained during the link publishing process, which makes it difficult to obtain a proper s for determining $\left\lceil n/s \right\rceil$ . To overcome this limitation, an algorithm is designed to add shortcuts among nodes on a path of link $L(d)$ in a probabilistic manner without knowing the length of the path.  Instead of splitting the path L(d) into segments, it lets each node on a link path randomly add a set of shortcut links from itself to its following nodes along the path when a link is being published in a blockchain transaction. A shortcut link allows a query along the path $L(d)$ to jump over a sub sequence of nodes from a node to the rear part of the link path by following the pointer of the shortcut link. The aim is to allow the tracing process to visit $log(n)$ nodes to reach the end of the path and obtain all links in $L(d)$ without knowing n. After building the shortcuts, the node holds replicas of node IDs within the sub-sequence of the link path from the node to its longest shortcut on the link path so that one can obtain the path in $L(d)$ by visiting $log(n)$ nodes. In this way, $log(n)$ shortcuts are built for each node to enable accessing the path $L(d)$ within $log(n)$ steps of jumps from the root node to the final target node in $L(d)$. Since each node is stored in one host, thus, $log(n)$ hosts are accessed to obtain the link path.

\begin{figure}[htbp]
\centering
\includegraphics[width=0.5\textwidth]{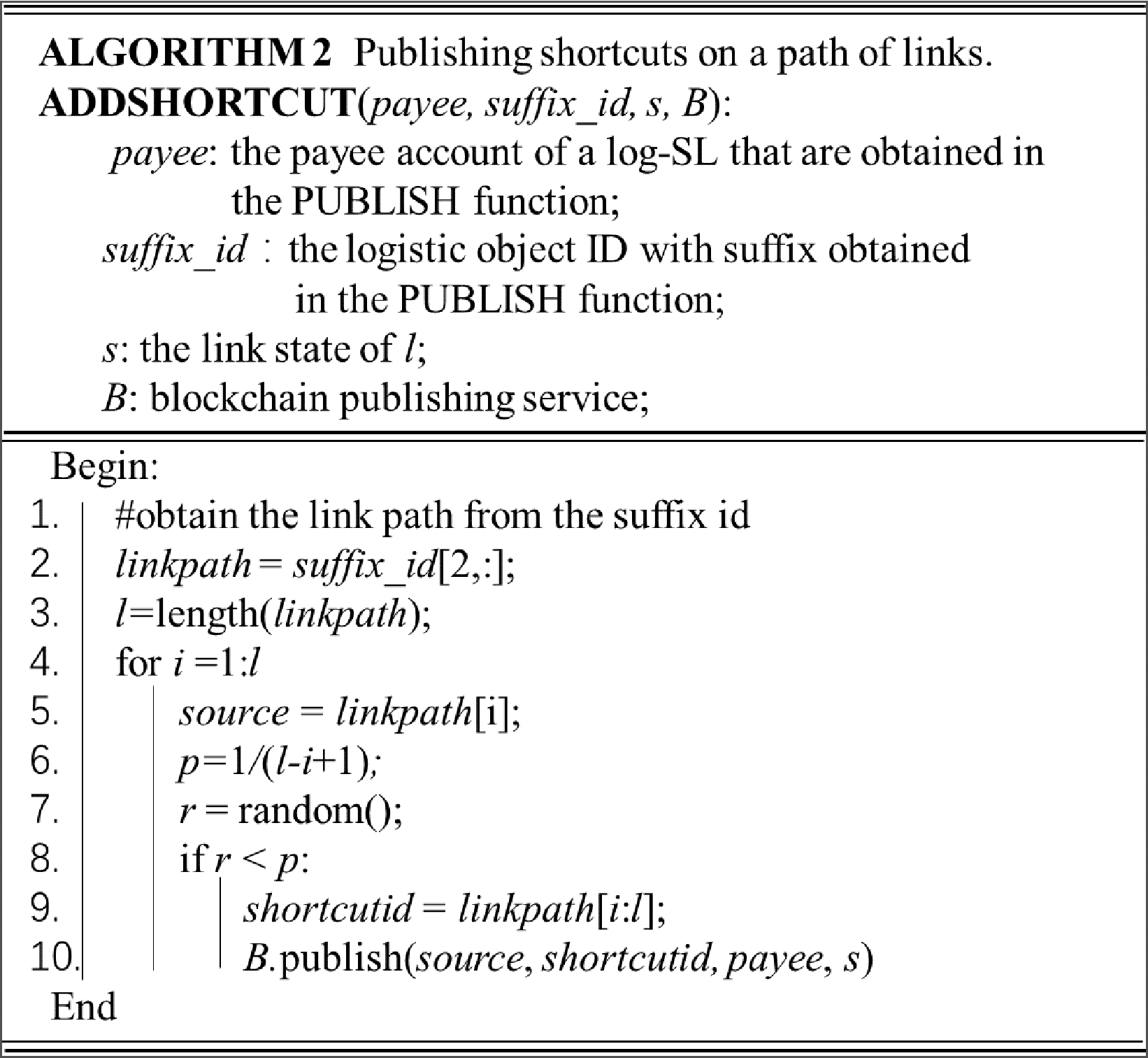}
\caption{Publishing shortcut links.}\label{fig7}
\end{figure} 

Algorithm 2 in Fig.7 shows the shortcut publishing algorithm. A shortcut is constructed as the logistic object $d$ is transferred to a payee account when a link is published in a block transaction from a payer account to the payee account. The suffix ID of the object $d$ is used as a token to record the account IDs that it has been transferred to. Thus, the path of $d$ can be obtained from the suffix ID (in Line 1). For each node $v_i$ on the path, the distance si from $v_i$ to payee is $l-i+1$ and a short cut is constructed in a random sampling way with the sampling probability of $1/s_i$ (from line 6 to line 10). When a shortcut link is built, a blockchain transaction with the suffix ID of the logistic object $d$ from $v_i$ to payee is published by $v_i$ so that the account $v_i$ will have a pointer to payee (line 10). Note that the length $l$ is not the final length of the link path, it is the length of current path that has been passed through by $d$. Thus, the algorithm does not need to know the length of the final path.
 According to Theorem 3 and Theorem 4, each node along the path $L(d_i)$ will have$ O(ln(n))$ shortcut links on average and the expected steps to obtain the whole links of $L(d_i)$ is $O(ln(n))$.
Fig. 8 shows an example of shortcuts appended on a blockchain list for one logistic object $d_1$. There are shortcut links across multiple nodes on the link path.  For example, $v_1$ has two shortcuts to $v_3$ and $v_5$ respectively, then it has replicas of node IDs of the path from $v_1$ to $v_5$. The shortcut to $v_3$ is added with a probability of $P(\textit{add})=1/2$ because logistic object $d_1$ moves 2 steps from $v_1$. The shortcut to $v_5$ is added with a probability of $P(\textit{add})=1/4$.  By accessing $v_1$, one can obtain a set of nodes from $v_2$ to $v_5$ of the current path, which can greatly speed up traversal performance.

\begin{figure}[htbp]
\centering
\includegraphics[width=0.5\textwidth]{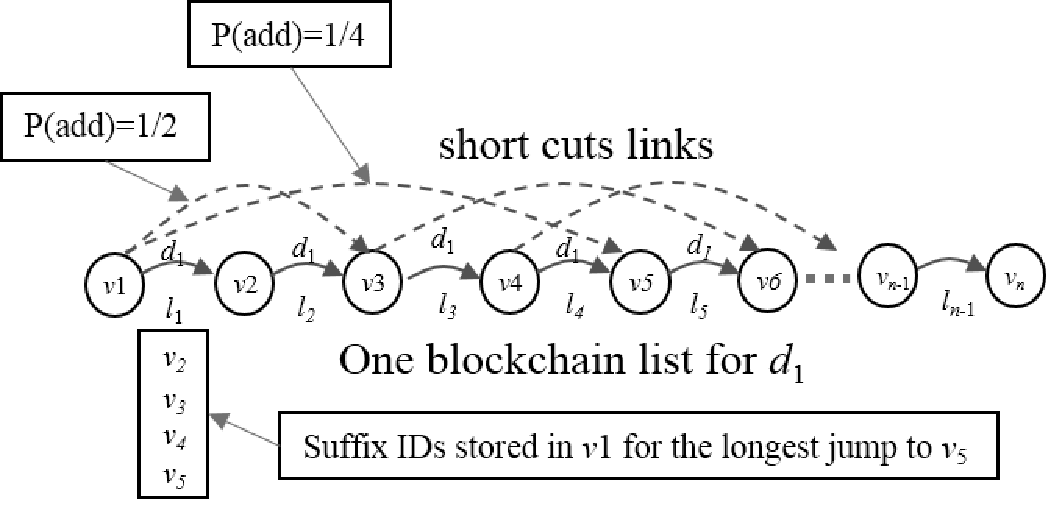}
\caption{Shortcuts on a blockchain list.}\label{fig8}
\end{figure} 

A query on a path of links is processed in an iterative way to obtain a tree structure formed by the paths of a logistic object $d$ transported from the root account $T$ (Shown by Algorithm 3 in Fig. 9). The Algorithm 3 includes following input arguments: $\textit{parent}$ records the current tree node to be constructed to record the paths starting from the first root node of the tree used as the final output argument of the algorithm, $\textit{visist}$ list records the account IDs that have been processed, \textit{$currenttrepath$}  records the tree path that has been traversed from the root account, T is the current blockchain account ID that the query is processing, the logistic object ID $d$ is the query target ID and B is the blockchain platform service interface.

\begin{figure}[htbp]
\centering
\includegraphics[width=0.5\textwidth]{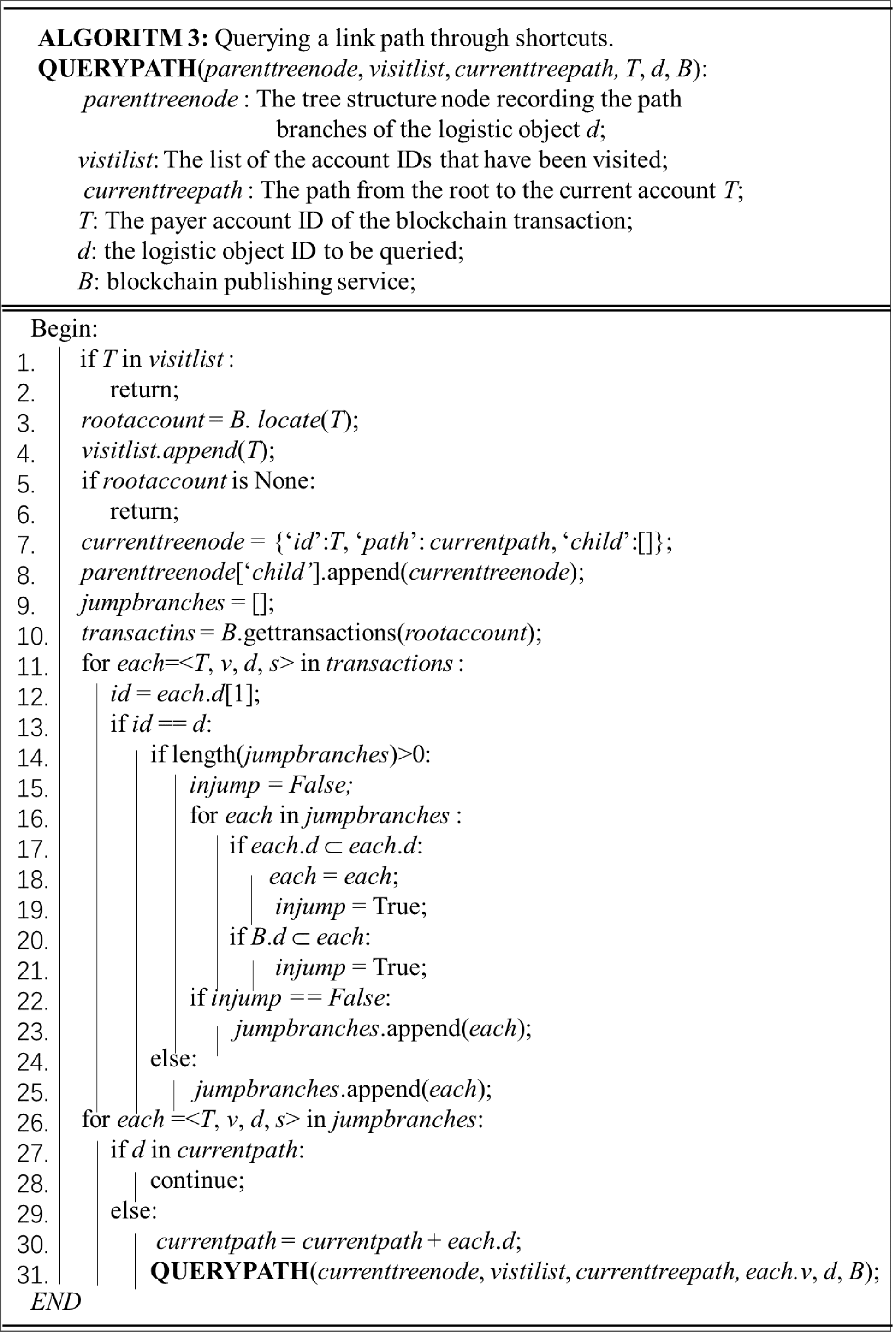}
\caption{Algorithm of querying a link path through shortcuts.}\label{fig9}
\end{figure} 

The Algorithm 3 starts from the root account T of a logistic process. In each iteration, a tree node \textit{currenttreenode} is constructed in line 8 and is set as a child node of the tree node parent tree node, which finally forms a tree consisting of paths of the logistic object $d$ has been traversed as the query result. After constructing the current tree node, all the blockchain transactions started from $T$ are obtained (line 11) and each blockchain transaction is checked to find those blockchain transaction that are used to transfer the query target $d$ (line 13 and 14). For each of such a blockchain transaction, two cases need to be handled. If the suffix ID (B.d in line 13) of the blockchain transaction $B=<T, v, d, s>$  is longer than an existing suffix IDs that have been located and store in \textit{jumpbranches} (line 18), then the blockchain transaction that stored in \textit{jumpbranches} is replaced by current $B$ (line 19). If there is no such blockchain transaction, $B$ is a new branch of transportation of d and is appended to the list \textit{jumpbranches}. In this way, the blockchain transaction that has the longest jump can be located. After checking all the blockchain transactions sent from $T$, \textit{jumpbranches} contains the longest paths that $d$ has taken. Then, for each jump in \textit{jumpbranches}, the iterative call on \textit{QUERYPATH(currenttreenode, vistilist, currenttreepath, each.v, d, B)} will recursively process each branch using the updated input argument in the last call of the procedure until reaching the leaf nodes that has no branch of transportation of $d$ (line 32). The recursive call can be processed in an asynchronized way so that the branches can be processed concurrently processed efficiently by leveraging the shortcut jumps. 
According to Theorem 2, the expected number of shortcuts added for each node is $ln(n)$, where $n$ is the total number of nodes of the link. The number of steps for finding a node is also bounded by $ln(n)$ (Theorem 3 and 4). According to Theorem 5, the number of suffix IDs for duplicated during the short cut construction along a path is $O(n^2)$. All links can be obtained within the $O(ln(n))$ steps by allowing a query from starting node to concurrently process along all the shortcut links, which can be deemed as a width-first concurrent traverse on a tree of depth $O(ln(n))$.

\begin{theorem}\label{thm2}
The expected number of shortcut links for $v_1$ is $O(ln(n))$.
\end{theorem}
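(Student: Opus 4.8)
The plan is to reduce the expected shortcut count at $v_1$ to a harmonic sum and then bound that sum logarithmically. The whole argument rests on reading off, from Algorithm~2, the exact probability with which a shortcut out of $v_1$ is created, and then applying linearity of expectation.

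First I would pin down the sampling rule as it applies to $v_1$. A shortcut emanating from $v_1$ is published only at the instant the transported object $d$ reaches a fresh payee $v_j$ on the path $L(d)$; at that moment the distance from $v_1$ to the payee is $s_1 = j-1$ edges, and the algorithm samples the shortcut $v_1 \to v_j$ with probability $1/s_1 = 1/(j-1)$. This is exactly the behavior illustrated in Fig.~8, where $v_1$ acquires the shortcuts $v_1 \to v_3$ and $v_1 \to v_5$ with probabilities $1/2$ and $1/4$. The key structural observation is that each of the $n-1$ downstream nodes $v_2, v_3, \ldots, v_n$ becomes the payee exactly once, so $v_1$ performs exactly one Bernoulli trial per downstream node, and these trials enumerate \emph{all} candidate shortcuts out of $v_1$ with no target counted twice.

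Next I would introduce indicators: let $X_j$ be the event ``a shortcut $v_1 \to v_j$ is created,'' so that $E[X_j] = 1/(j-1)$, and write the total shortcut count at $v_1$ as $X = \sum_{j=2}^{n} X_j$. Linearity of expectation (which requires no independence among the $X_j$) then gives
\[
E[X] = \sum_{j=2}^{n} \frac{1}{j-1} = \sum_{k=1}^{n-1}\frac{1}{k} = H_{n-1},
\]
the $(n-1)$-st harmonic number. Invoking the standard bound $H_{n-1} \le 1 + \ln(n-1)$ yields $E[X] = O(\ln n)$, which is the claim.

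The arithmetic here is routine; the one place that demands care is the first step, namely translating the algorithm's ``sampling probability $1/s_i$'' into the probability $1/(j-1)$ for $v_1$ and confirming that each downstream node triggers at most one trial from $v_1$. If that bookkeeping were sloppy one might double-count targets or misread the distance, turning the bound into something larger than harmonic. Once the per-node trial structure is fixed, the remainder is just recognizing the harmonic series and its logarithmic growth. I would also note in passing that, since $H_{n-1} = \Theta(\ln n)$, the bound is tight up to constants, which is consistent with the ``$O(\ln n)$ shortcuts per node on average'' statement used in the later path-tracing analysis.
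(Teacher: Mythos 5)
Your proof is correct and follows essentially the same route as the paper's: both identify the per-node sampling probability $1/s_i$ (i.e.\ $1/(j-1)$ for target $v_j$), apply linearity of expectation over the resulting indicator variables, and bound the harmonic sum $\sum_{k=1}^{n-1} 1/k$ by $O(\ln n)$. Your write-up is merely more explicit than the paper's about the bookkeeping (one Bernoulli trial per downstream node, no independence needed) and about the tightness of the bound.
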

\begin{proof}

At the $i$th step of movement of $d_i$, the probability of adding a shortcut link from $v_1$ is $1/i$. The expectation of the number of links from $v_1$ to the $ith$ node is also $1/i$. The expectation of the total number of shortcut links of $v1$ is the sum of expectations of the $ith$ node for $i=1..n$, which is $O(ln(n))$ according to the upper bound of the limitation of the sum of first $n$ elements in a Harmonic series \cite{RN35}.
\end{proof}

\begin{theorem}\label{thm3}
The expected number of steps for reaching a target node $vj$ from $vk$ is $O(ln(j-k))$.
\end{theorem}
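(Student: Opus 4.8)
The plan is to reduce this to the standard analysis of randomized skip lists, by first pinning down the law of the shortcuts leaving a single node and then running a distance-halving phase argument. The first step is to record the shortcut distribution implied by Algorithm 2 and already used in the proof of Theorem~\ref{thm2}. Fix a node $v_c$ on the path. Each time the object $d$ reaches a later node as the payee, Algorithm 2 adds the shortcut from $v_c$ to that payee with probability $1/s_i = \Theta(1/m)$, where $m$ is the number of steps from $v_c$ to the new endpoint; these coin flips are made at distinct publishing events and are therefore independent across the target offset $m$ and across the base node $c$. Hence, from any node $v_c$, the outgoing shortcut to the node $m$ steps ahead is present independently with probability $\Theta(1/m)$, which is exactly the law invoked in Theorem~\ref{thm2}. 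Without loss of generality I would take $v_k=v_1$ and write $D=j-k$, so that it suffices to bound the expected number of greedy steps from $v_1$ to $v_D$.

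The key lemma I would establish is that from any node whose remaining distance to the target is $\delta\ge 2$, a single greedy step at least halves that distance with probability bounded below by an absolute constant $c_0>0$. Modelling the traversal as greedy routing that takes the farthest outgoing shortcut which does not overshoot $v_j$, it suffices that $v_c$ owns at least one shortcut landing at an offset $m\in(\delta/2,\delta]$. By the independence above, the probability of no such shortcut is $\prod_{m=\lceil \delta/2\rceil+1}^{\delta}(1-\Theta(1/m))$, a product that telescopes to a value at most $1-c_0$ (morally close to $1/2$) uniformly in $\delta$. When no upper-half shortcut is present, the greedy step still advances by at least one node along the base link $v_c\to v_{c+1}$, so the walk never stalls and never overshoots the target.

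I would then assemble the bound by a phase argument with stochastic domination. Grouping the traversal by the level $\lceil \log_2\delta\rceil$ of the remaining distance, there are at most $\log_2 D+1$ levels, and because greedy routing only moves toward the target the level is nonincreasing. By the key lemma each step taken while at a fixed level drops the level with probability at least $c_0$, using only the fresh outgoing randomness of the currently visited node; consequently the number of steps spent at any single level is stochastically dominated by a Geometric$(c_0)$ variable of mean $1/c_0$. Summing expectations over the at most $\log_2 D+1$ levels and applying linearity of expectation yields expected total steps at most $(\log_2 D+1)/c_0 = O(\ln D)=O(\ln(j-k))$.

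The step I expect to be the main obstacle is the independence bookkeeping inside the phase argument: I must argue that whenever the walk arrives at a new node its outgoing shortcuts are independent of the entire history of the walk, which holds because each node's shortcuts are fixed by its own publishing-time coin flips and each node is visited at most once along a forward path, so the per-step halving probability is genuinely at least $c_0$ when conditioned on the past. A secondary technical point is to verify that the telescoping product is bounded below by a constant uniformly for every $\delta\ge 2$, including small $\delta$ and the off-by-one arising from $s_i=l-i+1$, so that the constant $c_0$ does not degrade as the remaining distance shrinks.
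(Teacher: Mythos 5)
Your proof is correct, and at the level of strategy it matches the paper's: greedy forward routing in which, from the current node, some shortcut covers a constant fraction of the remaining distance, so the distance shrinks geometrically and $O(\ln(j-k))$ steps suffice. The difference is one of rigor, and here it is substantial: the paper's entire proof is three sentences asserting that from $v_k$ there is a shortcut jumping a $1/e$ fraction of the remaining distance ``with high probability,'' followed by ``recursively applying this process.'' As stated that claim is too strong --- by the very telescoping computation you perform, the probability that the current node owns a shortcut into the far constant fraction of the remaining interval is only a constant (about $1-1/e$ for the paper's fraction, about $1/2$ for your halving version), not a probability tending to one --- and the paper never explains how constant-probability successes convert into an expectation bound. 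Your proposal supplies exactly the missing machinery: the fallback advance-by-one step when no long shortcut exists, stochastic domination of the time spent at each distance level by a Geometric$(c_0)$ variable, summation over the $O(\log(j-k))$ levels, and the independence bookkeeping (each node's outgoing shortcuts are fixed by its own publishing-time coins, and a forward walk visits each node at most once), which is what legitimizes using the per-step success probability $c_0$ conditionally on the history. Your residual worry about uniformity in $\delta$ is easily dispatched: $\prod_{m=\lceil\delta/2\rceil+1}^{\delta}\bigl(1-1/m\bigr)=\lceil\delta/2\rceil/\delta\le 3/4$ for all $\delta\ge 2$, so $c_0\ge 1/4$ works, and the off-by-one in $s_i=l-i+1$ only perturbs this telescoping product by a bounded factor.
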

\begin{proof}
Assuming $vk$ is at the $t$th step and the number of steps from $v_k$ to $v_j$ is $j-k$, which is represented as $s(t)=j-k$. Then there is one shortcut $v_q$ from $v_k$ that can jump across a segment of steps $(j-k)/e$ with high probability. That is, $v_q$ that can help reduce the distance from $1/e$ of $s(t)$ and the distance from the $v_q$ to $v_j$ is $(j-k)/e$. Recursively applying this process, there is at most $O(ln(j-k))$ steps to reach $v_j$.
\end{proof}

\begin{theorem}\label{thm4}
The expected longest jump from $v_k$ to $v_j$ is $O((j-k)/e)$.
\end{theorem}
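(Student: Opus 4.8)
The plan is to treat the shortcuts emanating from $v_k$ as a family of independent Bernoulli events, one per reachable distance, and then study the maximum reachable distance that stays within the target $v_j$. Writing $m = j-k$ for the number of hops separating $v_k$ from $v_j$, recall from Algorithm 2 and the analysis of Theorem \ref{thm2} that when the object $d$ reaches the node at distance $r$ from $v_k$, a shortcut from $v_k$ to that node is installed independently with probability $1/(r+1)\approx 1/r$. I would therefore define the longest admissible jump as $J = \max\{\, r \le m : \text{a shortcut from } v_k \text{ to } v_{k+r} \text{ exists}\,\}$, so that a single hop along the longest shortcut advances the query from $v_k$ to $v_{k+J}$ without overshooting $v_j$.

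First I would compute the lower tail of $J$. Since the events are independent across distances,
\[
\Pr[J < t] \;=\; \prod_{r=t}^{m}\Bigl(1-\tfrac1r\Bigr).
\]
Using the standard approximation $1 - 1/r = e^{-1/r + O(1/r^2)}$ together with the harmonic estimate $\sum_{r=t}^{m} 1/r = \ln(m/t) + O(1/t)$, this product collapses to
\[
\Pr[J < t] \;\approx\; \exp\!\Bigl(-\sum_{r=t}^{m}\tfrac1r\Bigr) \;\approx\; \frac{t}{m}.
\]
The threshold at which this tail probability equals $1/e$ is exactly $t = m/e$; equivalently, with probability $1 - 1/e$ the longest jump reaches at least $(j-k)/e$ hops toward $v_j$. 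This is the sense in which the longest jump is of order $(j-k)/e$: it is the $1/e$-quantile of $J$, the characteristic scale governing how far one hop advances. Combining this with the trivial deterministic bound $J \le m = j-k$ (a shortcut cannot pass $v_j$) pins the longest jump to the scale $\Theta(j-k)$ announced in the statement.

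I would then feed this back into the recursion of Theorem \ref{thm3}: since each hop along the longest shortcut reduces the residual distance to $v_j$ by a constant factor (the $1/e$ scale above) with constant probability, iterating the argument over the residual segment yields the claimed $O(\ln(j-k))$ routing steps, making Theorems \ref{thm3} and \ref{thm4} mutually consistent.

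The step I expect to be the main obstacle is pinning the constant to $1/e$ rather than $1/2$. The genuinely rigorous content of the tail computation is that $J/m$ has approximately the uniform law $\Pr[J \le t] \approx t/m$, whose \emph{mean} is $m/2$, not $m/e$; the value $m/e$ is the $1/e$-quantile produced by the exponential form of the harmonic product. The care needed is therefore in reading ``expected longest jump'' as this threshold/quantile scale — the quantity that actually drives the logarithmic recursion — and in making precise that the constant-probability guarantee $\Pr[J \ge (j-k)/e] \ge 1 - 1/e$ suffices to propagate the geometric decrease through Theorem \ref{thm3}. A secondary technicality is controlling independence and the boundary terms near $r=t$ and $r=m$ in the product, where the approximation $1-1/r \approx e^{-1/r}$ is loosest.
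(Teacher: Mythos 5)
Your proposal is correct where it matters, and it is genuinely more substantive than the paper's own argument. The paper proves Theorem~\ref{thm4} in a single line (``following Theorem~\ref{thm3}''), but Theorem~\ref{thm3}'s proof itself only \emph{asserts}, with no computation, that a shortcut from $v_k$ spanning about $(j-k)/e$ hops exists with high probability; so in the paper the content of Theorem~\ref{thm4} is really an unproven premise buried inside Theorem~\ref{thm3}, which Theorem~\ref{thm4} then extracts. You instead derive the claim from first principles, modeling the shortcuts out of $v_k$ as independent Bernoulli events with success probability about $1/r$ at distance $r$ and computing the tail of the longest admissible jump $J$. Two refinements: first, the product you approximate actually telescopes exactly, $\prod_{r=t}^{m}\bigl(1-\tfrac1r\bigr)=\tfrac{t-1}{m}$, so $J$ is \emph{exactly} uniform on $\{1,\dots,m\}$ and your ``secondary technicality'' about the $e^{-1/r}$ approximation and boundary terms disappears entirely; second, your observation that $\mathbb{E}[J]=(m+1)/2\approx m/2$ rather than $m/e$ is correct and is a fair criticism of the theorem's wording --- the literal expected longest jump is $m/2$, and $m/e$ is only the $1/e$-quantile, i.e.\ the scale $t$ at which $\Pr[J\ge t]$ reaches $1-1/e$. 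Since the statement is an $O(\cdot)$ bound and big-O absorbs constant factors, $m/2=O((j-k)/e)$ and the theorem survives either reading; the $/e$ inside the $O(\cdot)$ carries no mathematical content. Your closing step, feeding the constant-probability guarantee $\Pr[J\ge (j-k)/e]\ge 1-1/e$ back into the recursion, is also sound and is the cleaner logical order: it supplies exactly the justification that the paper's proof of Theorem~\ref{thm3} takes for granted.
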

\begin{proof}
Following Theorem 3, we can obtain the expected upper-bound of the longest jump.
\end{proof}

According to Theorem 3 and 4. The expected number of IDs stored for a node $vk$ is the $(n-k)/e$. There are $n(n-k)/e$ node IDs on n nodes for one link path and a query from any node to its following ending node in the path can be processed in $O(ln(n))$ steps involving $O(ln(n))$ access to nodes.
\begin{theorem}\label{thm5}
 The expected number of suffix IDs for a link of length $n$ is $O(n2/e))$.
\end{theorem}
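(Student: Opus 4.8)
The plan is to obtain the bound by a direct application of linearity of expectation, summing the per-node storage cost already established by Theorem~\ref{thm4}. The key observation is that the quantity being counted decomposes cleanly over the $n$ nodes of the link path $L(d)$, so no new probabilistic machinery is required beyond the longest-jump estimate.

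First I would recall, from Theorem~\ref{thm4}, that the expected length of the longest shortcut (jump) from a node $v_k$ toward the terminal node $v_n$ is $O((n-k)/e)$. As described just before Algorithm~3, once a node builds its longest shortcut it retains replicas of every node ID lying on the segment of $L(d)$ between itself and the target of that shortcut. Hence the expected number of suffix/node IDs held at $v_k$ equals the expected jump distance, namely $(n-k)/e$, which gives the per-node storage cost.

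Next I would sum this cost across the whole path. Writing $N$ for the total number of stored suffix IDs, $N$ is the sum over all nodes of the IDs each node retains, so linearity of expectation (which holds regardless of the dependence among the shortcut events) yields
\[
  E[N] \;=\; \sum_{k=1}^{n} \frac{n-k}{e}.
\]
Evaluating the arithmetic series $\sum_{k=1}^{n}(n-k) = \sum_{j=0}^{n-1} j = \tfrac{n(n-1)}{2}$ gives a total of $\tfrac{n(n-1)}{2e} = O(n^2/e)$, which is the claimed bound.

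The main obstacle is not the summation itself but justifying the per-node identity that the expected number of stored IDs equals the expected longest-jump length $(n-k)/e$. This step leans on Theorem~\ref{thm4} and requires being explicit that each node stores one replica per intermediate node up to its furthest shortcut target, so that the ID count at $v_k$ is proportional to (indeed equal to) the jump distance. I would also note explicitly that applying linearity of expectation is legitimate even though the shortcut events at different nodes are correlated through the shared random transfer process of $d$; since linearity does not require independence, this correlation presents no difficulty.
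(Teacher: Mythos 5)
Your proof is correct and follows essentially the same route as the paper's: both derive a per-node replica count from the longest-jump estimate (Theorem~\ref{thm4}, i.e.\ a node stores IDs up to its furthest shortcut target, expected distance $O((n-k)/e)$) and then aggregate over the $n$ nodes of the path. Your version is marginally sharper in the bookkeeping — you evaluate the arithmetic series $\sum_{k=1}^{n}(n-k)/e = n(n-1)/(2e)$ and explicitly invoke linearity of expectation, whereas the paper simply bounds each node's count by $n/e$ and multiplies by $n$ — but the underlying argument is the same.
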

\begin{proof}
 Following Theorem 2. The longest expected jump is $1/e$, which incurs $n/e$ replicas of suffix IDs.  Therefore, $n$ nodes will have at most $n^2/e$ replicas of node IDs. 
\end{proof}

\subsection{Penalty-Reward for Confirmation of Link States}\label{subsec4-6}
A penalty-reward policy is designed to encourage users to publish semantic links in a consistence way to avoid conflict in the confirmation of a link state by the participants of the logistic transportations. When publishing a semantic link u $ u-<l,d>\rightarrow v$ with state $l(n)$, the hosts of node $u$ and $v$ of the semantic link need to confirm the consistence of the published link with the real logistic transportation. If the link is correctly published, then the two hosts should reach consensus on the state of the published link. Otherwise, they should make a check on the status of the logistic transportation. If they cannot reach the consensus on the state of the link, they are unable to publish new links. There should be a penalty on the account $u$ who publishes inconsistent links that cannot be confirmed by node $v$. But there can be malicious responses on confirmation, thus we need to control the confirmation process to let responses also pay costs so that malicious responses can be suppressed.

\begin{figure}[htbp]
\centering
\includegraphics[width=0.5\textwidth]{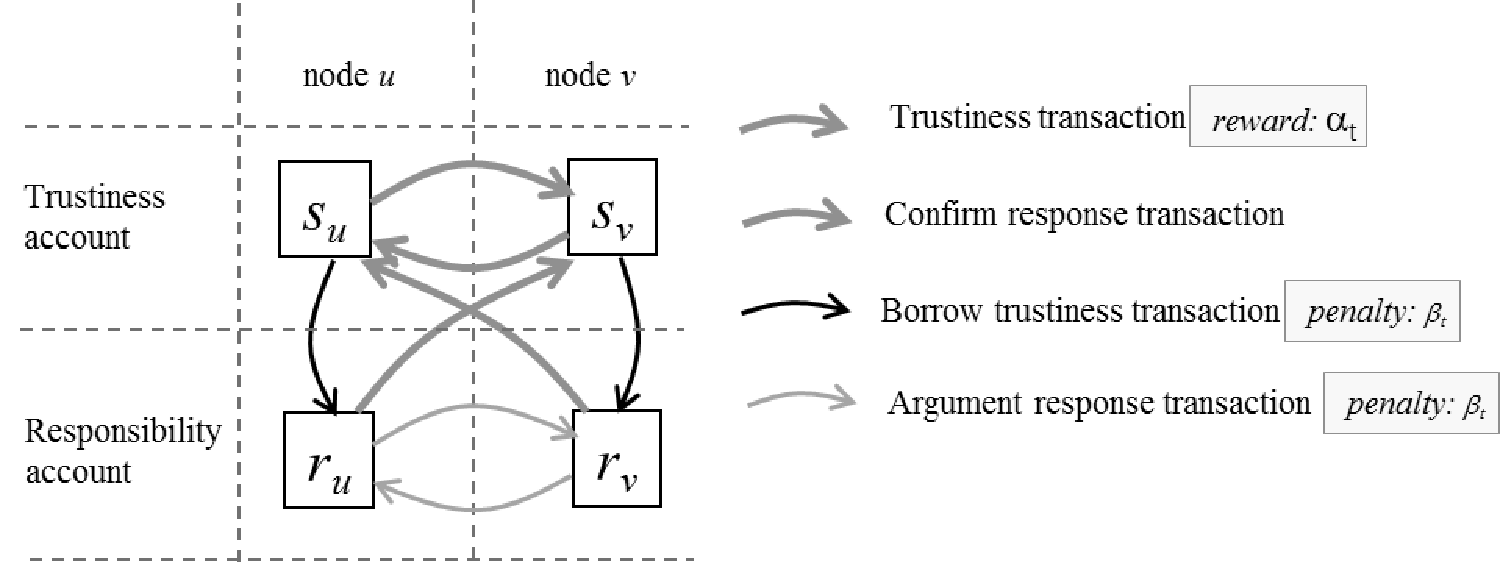}
\caption{Confirmation transactions.}\label{fig10}
\end{figure}  

To implement such a policy, each node is assigned with two scores in two accounts $su$ and $ru$, where $su$ is the trustiness account that is used for starting a confirmation of link and $ru$ is the responsibility account that is used to store the score that needs to response to a confirmation. Then, the confirmation process is conducted by transferring scores through blockchain transactions among the trustiness account and responsibility account of two nodes. It can be implemented in a smart contract. Four types of transactions can be issued among the four accounts of two nodes u and v as shown in Fig. 10. The whole confirmation procedure can be executed in following four stages:

1)	The trustiness transactions are issued between $s_u$ and $s_v$, which are used to send confirmation requests and confirmation responses for each published link $t$ (two green arrows in Fig.10). The trustiness score is added by $S_t$ when there is a new link $t$ to be confirmed. A trustiness transaction will transfer $S_t$ from $s_u$ to $s_v$. Thus, the trustiness transaction can be executed only one time for one link. When node $v$ confirms the link state, the trustiness score $S_t$ is sent back to $s_u$ in another trustiness transaction and a reward score $\alpha_t$ is added to both trustiness accounts of $s_u$ and $s_v$.
2)	If node $v$ has a confliction on the confirmation, it needs to start an argument to response node $u$. In this case, it will not use the trustiness transaction. Instead, node v uses the account $r_v$ to make an argument response transaction to $r_u$ with score $S_t$ (gray arrow in Fig. 10). However, the response account $r_v$ does not have initial scores, which needs to borrow scores from its own trustiness account $sv$ in a borrow trustiness transaction (dark arrows in Fig. 10). Then, the score is sent to the responsibility account in an argument response transaction. When node $u$ receives the transaction scores in its $r_u$, it knows that there is a confliction on confirmation. Node $u$ needs to check the transportation with node $v$ so that they can reach the consensus on the state of link.
3)	If they reach a consensus, node $u$ uses the confirm response transaction to send score $S_t$ back to the trustiness account $s_v$ of node $v$. Then, node $v$ sends $S_t$ back to the node $u$ through a trustiness transaction. In this case, the confirmation completes.
4)	If they need further arguments, response transactions are used to send scores from their own responsibility account to counterpart responsibility account, until one sends the score $S_t$ back to the trustiness account, which indicates that they reach consensus.
Since the responsibility account initially has a zero score, it needs to borrow scores from the trustiness account to make an argument. Moreover, there is an extra penalty $\alpha_t$ charged on their trustiness account for each borrow trustiness transaction. Each argument transaction is also charged by $\alpha_t$. That is, if more arguments are made, their trustiness score will be exhausted, and they will have not enough trustiness score to start a new confirmation process for a new link and start an extra argument response. So, malicious responses are costly for both sides of a link. On the contrary, if they can reach consensus quickly on a confirmation just using the trustiness transactions, rewards can be accumulated in their trustiness accounts.
Finally, a reliability score $R_u=s_u-r_u$ can be obtained for a node u as the indicator of the performance of $u$. A higher Ru means that there are more consistent links published and less arguments for node $u$. The nodes with reliability score lower than a threshold R will be halt.
As the topology shown in Fig. 10 is strong connected, the scores will finally approach an ending distribution for four accounts. In an extreme case, the two nodes are halt after their reliability score is lower than $R$.

\begin{theorem}\label{thm6}
The scores of $s_u$, $s_v$, $r_u$, and $r_v$ will approach to a halting condition for one complete confirmation process with two trustiness transactions.
\end{theorem}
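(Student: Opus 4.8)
The plan is to represent the four account scores as a state vector $(s_u, s_v, r_u, r_v)$ and to treat each of the four transaction types in Fig.~10 as an affine update on this vector; a halting condition is then a reachable configuration at which no transaction is enabled, so the scores cease to flow. Since the theorem isolates the simplest closed confirmation---exactly two trustiness transactions and no argument---the proof reduces to a deterministic trace of the score flow along this single path, rather than a convergence argument for the full stochastic dynamics on the strongly connected graph.

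First I would fix the initial configuration as $(s_u^0, s_v^0, 0, 0)$: both responsibility accounts start empty by construction, and $s_u^0$ is assumed to carry at least the stake $S_t$ needed to confirm link $t$. The first trustiness transaction then moves $S_t$ from $s_u$ to $s_v$, giving $(s_u^0 - S_t,\, s_v^0 + S_t,\, 0,\, 0)$; this is the confirmation request, and the responsibility accounts are untouched. Because this branch assumes node $v$ confirms without conflict, I would verify that neither the borrow-trustiness map nor the argument-response map ever fires, so the invariant $r_u = r_v = 0$ holds throughout. The second trustiness transaction returns $S_t$ from $s_v$ to $s_u$ and credits the reward $\alpha_t$ to both trustiness accounts, producing the terminal configuration $(s_u^0 + \alpha_t,\, s_v^0 + \alpha_t,\, 0,\, 0)$.

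Next I would confirm that this terminal state is a genuine halting condition: the stake $S_t$ has made a round trip $s_u \to s_v \to s_u$ and nets to zero, the responsibility accounts remain zero so the conflict-resolution branch is never entered, and once $S_t$ has returned and the reward has been posted no transaction of Fig.~10 is still enabled. It follows that the reliability scores $R_u = s_u - r_u$ and $R_v = s_v - r_v$ each increase by exactly $\alpha_t$, which matches the stated design goal that a quickly-confirmed link accumulates reward and stays above the halting threshold $R$. To connect with the surrounding remark about the strongly connected topology, I would observe that the two-transaction path is precisely the closed cycle on the trustiness sub-graph, so the system returns to the trustiness diagonal $r_u = r_v = 0$ and rests there.

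The obstacle I expect is definitional rather than computational. The phrase ``approach to a halting condition'' is left informal, so the substantive work is to pin down precisely what a halting configuration is---no enabled transaction and responsibility accounts at zero---and to establish the invariant that, on this happy-path branch, the borrow and argument maps are never triggered. Once the update maps are written out explicitly and this invariant is fixed, verifying that the terminal state is a fixed point is a one-line check.
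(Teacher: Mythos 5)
There is a genuine gap: you have proven only the happy path, while the paper's proof (terse as it is) treats the theorem as a general halting claim with \emph{two} possible terminal outcomes. The paper's argument is: since the transaction graph of Fig.~10 is strongly connected and every borrow-trustiness and argument-response transaction is charged the penalty $\alpha_t$, the process must end in one of two states --- either the two trustiness transactions are eventually executed (consensus, possibly after a sequence of arguments), or the reliability scores are exhausted and both nodes halt. Your proof explicitly conditions on the branch where node $v$ confirms without conflict, maintaining the invariant $r_u = r_v = 0$ throughout; that assumption discharges exactly the part of the claim that needs work. If a conflict arises, the score flows through $s_v \to r_v \to r_u \to r_v \to \cdots$, and the substantive question becomes why this exchange cannot continue indefinitely. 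The answer --- which your proposal never invokes --- is that each argument round strictly drains the finite trustiness scores via the $\alpha_t$ charges, so the system either returns the stake to a trustiness account (triggering the closing trustiness transaction) or drops below the threshold $R$ and is halted. Without that draining argument, ``approach to a halting condition'' is unproven in the conflict case.

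Your deterministic trace of the two-transaction cycle is correct and is in fact more explicit than anything in the paper (the paper never writes out the terminal configuration $(s_u^0+\alpha_t,\, s_v^0+\alpha_t,\, 0,\, 0)$), so it would serve well as the analysis of the first ending state. To close the gap, you would add: (i) every transaction in the conflict sub-graph costs $\alpha_t$ from a trustiness account; (ii) trustiness scores are finite, non-negative, and not replenished during a confirmation; hence (iii) the number of argument transactions in any single confirmation is bounded, forcing the process into one of the two terminal states the paper names. That is the whole content of the paper's appeal to strong connectivity plus penalties, and it is the step your happy-path restriction silently assumes away.
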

\begin{proof}
The graph of Fig. 10 is strongly connected. There are two possible ending states: two trustiness transactions are finally executed; or their reliability scores $T$ are exhausted.
\end{proof}

One possible situation is that one participant does not do any further action. In this case, its responsibility account and trustiness account are not balanced. The further actions can be halt according to the unbalance of the two accounts of a host.
All the transactions are published on the blockchain and the two accounts are opened and signed by their owners so that anyone can obtain the reliability score of a host by visiting the opened transactions and corresponding trustiness and responsibility accounts. Although short cut links can increase the amount of transactions published on blockchain, it is still controlled and relatively much smaller compared with real transactions published on blockchain for bitcoin transactions. Rather than using smart contract, the logistic transportation data is encapsulated in transactions in a total decentralized way, which is much more secure than using smart contract based solution that requires a central implementation of data management.

The SLN is used to represent logistic transportations and are published on the blockchain platform in a decentralized way to support decentralized traceability of the states of logistic transportations. It has a good expressiveness in representing the locations of the logistic objects and states of logistic transportations in a logistic process. By establishing the mapping between the states of links and the blockchain transactions, decentralized publishing and tracing of links and their states are implemented on the blockchain. The method for constructing shortcuts on link paths achieves logarithmic number of accesses to nodes for obtaining links on a logistic transportation path. The effectiveness of the proposed method is further verified by simulation experiments. A reward-penalty policy is designed for two participants to reach consensus on the confirmation of the state of links.

The SLN provides a general model for developing various domain applications by defining the schema of SLN and the schema of state transitions.  Users can query the changing states on nodes and links and efficiently trace the changes through user interface without the need to concern the storage of data at the physical level and logical level.  The function of publishing and querying links on blockchain can be encapsulated as basic graph data management operators to support implementation of decentralized graph data applications on blockchain, enabling developers to focus on high-level graph data management through the developers’ interface without knowing the details of implementing the complex data model manipulation on blockchain platform.  The SLN has extensibility for implementing wide applications that need node and path querying on a semantically rich network based on reasoning on semantic links.

\begin{figure}[htbp]
\centering
\includegraphics[width=0.5\textwidth]{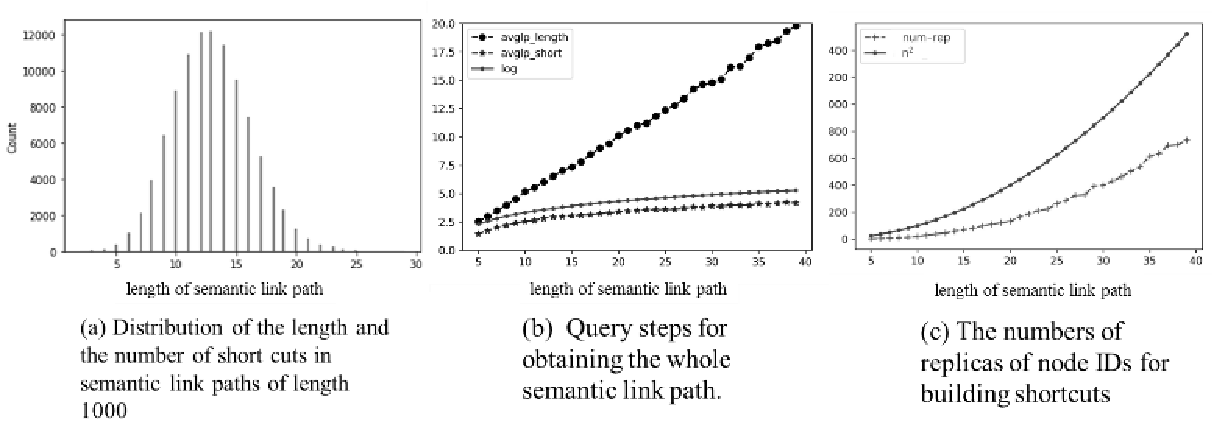}
\caption{Distribution of the length and the number of short cuts on semantic link paths of length 1000.}\label{fig11}
\end{figure}  

\section{Experimental Evaluation of Path State Query}\label{sec5}
Simulation experiments are carried out to evaluate the path query on the blockchain. We randomly generated a set of logistic processes, each consisting of a set of logistic objects that are to be transferred among a group of parties. Then, we evaluate the average costs of obtaining a link within a path of links by using the shortcuts method.
Fig. 11a shows the distribution of the number of shortcuts of nodes on a path of length 1000. It can be seen that both the average number and the max number of shortcuts for one node is bounded within a range from 1 to 25. That is, although the number $n$ of total nodes is unknown, the number of shortcuts for one node is still in control. Fig. 11b shows the steps of obtaining all nodes of one path. Each test case has 1000 paths with the average length from 5 to 40.  The $x$-axis is the average length of the paths. When there is no shortcut, the average number of steps is in linear growth with the length of path (shown by curve $avglp\_length$ in Fig.11b). When shortcuts are added to each path, the steps to obtain the whole path is within logarithms scale with respects to the length of path (shown by $avglp\_short$ curve in Fig. 8b), which is below the theoretic bound curve (log in Fig. 11b). Fig. 11c shows the number of replicas of node IDs produced after adding shortcuts to link paths. The number of replicas ($num\-rep$ in Fig. 11c) is bounded by $O(n^2)$ (shown as $n2$ in Fig. 11c). In summary, the simulation results confirm the analysis of the querying efficiency of paths when adding shortcuts to the path.
 
\section{Conclusion}\label{sec6}
The SLN is used to represent logistic transportations and are published on the blockchain platform in a decentralized way to support decentralized traceability of the states of logistic transportations. It has a good expressiveness in representing the locations of the logistic objects and states of logistic transportations in a logistic process. By establishing the mapping between the states of links and the blockchain transactions, decentralized publishing and tracing of links and their states are implemented on the blockchain. The method for constructing shortcuts on link paths achieves logarithmic number of accesses to nodes for obtaining links on a logistic transportation path. The effectiveness of the proposed method is further verified by simulation experiments. A reward-penalty policy is designed for two participants to reach consensus on the confirmation of the state of links.

The SLN provides a general model for developing various domain applications by defining the schema of SLN and the schema of state transitions.  Users can query the changing states on nodes and links and efficiently trace the changes through user interface without the need to concern the storage of data at the physical level and logical level.  The function of publishing and querying links on blockchain can be encapsulated as basic graph data management operators to support implementation of decentralized graph data applications on blockchain, enabling developers to focus on high-level graph data management through the developers’ interface without knowing the details of implementing the complex data model manipulation on blockchain platform.  The SLN has extensibility for implementing wide applications that need node and path querying on a semantically rich network based on reasoning on semantic links.

\backmatter
\bmhead{Acknowledgements}

This work was supported by National Science Foundation of China (project no. 61876048).  Hai Zhuge is the correspondence author.
\bigskip 

\bibliography{sn-bibliography}

\end{document}